\documentclass{article}
\usepackage{PRIMEarxiv}


\usepackage{mymacros}
\usepackage{natbib}




\usepackage[utf8]{inputenc} 
\usepackage[T1]{fontenc}    
\usepackage{hyperref}       
\usepackage{url}            
\usepackage{booktabs}       
\usepackage{amsfonts}       
\usepackage{nicefrac}       
\usepackage{microtype}      
\usepackage{xcolor}         

\title{High confidence inference on the probability an individual benefits from treatment using experimental or observational data}

%

\author{%
  Gabriel Ruiz\thanks{\url{gabriel-ruiz-github.io}} \\
  Department of Statistics\\
  University of California, Los Angeles\\
  Los Angeles, CA 90095\\
  \texttt{\url{17ruiz17@gmail.com}} \\
  \AND
  Oscar Hernan Madrid Padilla\thanks{\url{https://hernanmp.github.io/}}\\
  Department of Statistics\\
  University of California, Los Angeles\\
  Los Angeles, CA 90095\\
  \texttt{\url{oscar.madrid@stat.ucla.edu}}\\
}

\begin{document}

\maketitle

\begin{abstract}
We seek to understand the probability an individual benefits from treatment (PIBT), an inestimable quantity that must be bounded in practice. Given the innate uncertainty in the population-level bounds on PIBT, we seek to better understand the margin of error for their estimation in order to discern whether the estimated bounds on PIBT are tight or wide due to random chance or not. Toward this goal, we present guarantees to the estimation of bounds on marginal PIBT, with any threshold of interest, for a randomized experiment setting or an observational setting where propensity scores are known. We also derive results that permit us to understand heterogeneity in PIBT across learnable sub-groups delineated by pre-treatment features. These results can be used to help with formal statistical power analyses and frequentist confidence statements for settings where we are interested in assumption-lean inference on PIBT through the target bounds with minimal computational complexity compared to a bootstrap approach. Through a real data example from a large randomized experiment, we also demonstrate how our results for PIBT can allow us to understand the practical implication and goodness of fit of an estimate for the conditional average treatment effect (CATE), a function of an individual's baseline covariates. 
\end{abstract}

\section{Introduction}
\subsection{Motivation}

The premise of this work, in a vein similar to predictive inference with quantile regression, is that observations may lie far away from their conditional expectation. In the context of causal inference, due to the missing-ness of one outcome, it is difficult to check whether an individual's treatment effect lies close to its prediction given by the estimated average treatment effect or conditional average treatment effect ($\cate$). With the aim of augmenting the inference with these estimands in practice, we further study a distribution-free framework for the plug-in estimation of bounds on the probability an individual benefits from treatment ($\pibt$), a generally inestimable quantity that would concisely summarize an intervention's efficacy if it could be known.

In order to study $\pibt$, we are interested in the individual treatment effect:
\[
Y_i(1)-Y_i(0)\ (i=1,\dots,n).
\]
Here, $Y_i(w)$ ($w=0,1$) is known as a potential outcome  \citep{splawa-neyman1990,Rubin1974EstimatingCE} and equivalently as a counterfactual \citep{pearlCausality,hernan2020}. The random variable $Y_i(w)$ corresponds to the outcome $Y_i$ of interest when a hypothetical experimenter intervenes with nature to force the binary exposure indicator $W_i$ for individual $i$ to be little $w\in\{0,1\}$, typically through random assignment in an experiment. This intervention can also be denoted by \citet{pearlCausality}'s ``do'' operator: $Y_i(w)$ is the outcome we observe under $\text{do}(W_i=w)$. 

Suppose a large value of the observed outcome $Y_i=W_iY_i(1)+(1-W_i)Y_i(0)$ is ``good'' for an individual. Assuming we wish to show $W_i=1$ is effective at accomplishing this, we will consider an individual such that 
\begin{equation}\label{eqn:benefitTrtmntThresh}
Y_i(1)-Y_i(0)>\delta
\end{equation}
to have benefited from treatment. We can take $\delta$ to be any relevant threshold we would like, such as $\delta=0$. We may also reverse the inequality if more appropriate. Just as well, the vocabular and notational semantics of the inequality in \eqref{eqn:benefitTrtmntThresh} can instead be with respect to whether an individual is harmed by an intervention as we discuss in Appendix \ref{sec:PIHI}. Should the potential outcomes be binary, we may also use the inequality in \eqref{eqn:benefitTrtmntThresh} with $\delta=0$ as we discuss in Appendix \ref{sec:binaryOutcomes}. Moreover, if the outcome of interest is strictly positive, we may alternatively define benefiting from treatment in terms of the ratio of an individual's potential outcomes being above a threshold as we discuss in Appendix \ref{sec:diffDefnRatio}. For simplicity of presentation and without any loss of generality for these alternative definitions which our results can be applied to, we will say an individual benefits from treatment when the inequality in \eqref{eqn:benefitTrtmntThresh} holds. 
\begin{figure}[htpb]
\centering
\includegraphics[width=0.45\textwidth]{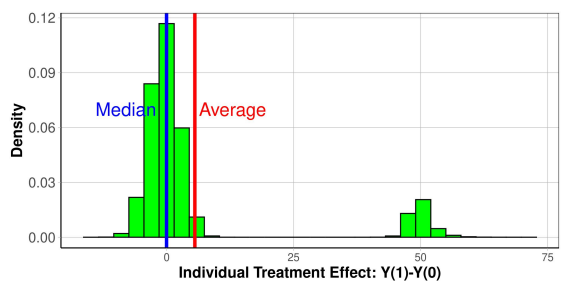}
\caption{A hypothetical distribution for the individual treatment effects. Here, the mean is positive yet the probability an individual's treatment outcome is better (larger in value) than their control outcome is 49.75\%.\label{fig:hypothITE}}
\end{figure}

As can be appreciated from the hypothetical distribution of the individual treatment effects in Figure \ref{fig:hypothITE}, it is very well possible that the average of the individual treatment effect distribution is pulled by outliers and therefore may mislead us. This example motivates our interest in the probability an individual benefits from treatment ($\pibt$) as provided in Definition \ref{defn:propBenefit}.

\begin{definition}[Probability an individual benefits from treatment ($\pibt$)\label{defn:propBenefit}]

To understand heterogeneity in treatment effect for individuals across differing strata of a pre-treatment covariate, $X_i$, denote
\[
{\rm pr}\{Y_i(1)-Y_i(0)>\delta\mid X_i=x\}
\]
as the conditional probability an individual benefits from treatment in pre-treatment covariate stratum $x$. To understand the effect of treatment for all individuals regardless of strata, denote
\[
\theta(\delta) :={\rm pr}\left\{Y_i(1)-Y_i(0)>\delta\right\}
\]
as the marginal probability an individual benefits from treatment.
\end{definition}

Here, $X_i$ is a pre-treatment covariate that is thought to deconfound variability in the outcome $Y_i$ that is not due to $W_i$ and exogenous noise alone \citep{Rubin1974EstimatingCE,imbens_rubin_2015}, as we state formally in Assumption \ref{assump:strongCondIgno}. One can see that $\theta(\delta)= \E{\prc{ Y_i(1)-Y_i(0)>\delta }{X_i}}$, where the expectation is taken with respect to the confounder $X_i$. Consider now two very similar quantities in Definition \ref{defn:mannU}. 

\begin{definition}\label{defn:mannU}

To understand heterogeneity in treatment's effect across differing individuals and across differing strata of a pre-treatment covariate, denote
\[
\eta(\delta,x) := \mathrm{pr}\{Y_i(1)-Y_j(0)>\delta\mid X_i=X_j=x;i\neq j\}
\]
as the probability that a randomly selected individual in pre-treatment covariate stratum $x$ has a treatment potential outcome that is better than the control potential outcome of a differing randomly selected individual also in stratum $x$. To understand treatment's effect across differing individuals overall, denote
\[
\eta(\delta) :=\mathrm{pr}\{Y_i(1)-Y_j(0)>\delta;i\neq j\}
\]
as the overall probability that a randomly selected individual has a treatment outcome that is better than a differing randomly selected individual's control outcome.
\end{definition}

Of crucial importance, $\prc{Y_i(1)-Y_i(0)>\delta}{X_i}$ and $\theta(\delta)$ are not in general the same as $\eta(\delta,x)$ and $\eta(\delta)$ in Definition \ref{defn:mannU} \citep{hand1992comparing,fayMannU2018,Greenland2020-tu}. This is because $Y_i(1)$ and $Y_i(0)$ are in general {dependent} random variables, while $Y_i(1)$ is independent of $Y_j(0)$ when $i\neq j$ under the standard Stable Unit Treatment Value Assumption (Assumption \ref{assump:sutva}). Under appropriate identifiability assumptions, such as Assumption \ref{assump:strongCondIgno} below, $\eta(\delta,x)$ and $\eta(\delta)$ can be identified because we can generally sample from the conditional distribution of $Y_i$ given  $(W_i,X_i)=(w,x)$ and the conditional distribution of $Y_i$ given $W_i=w$ $(w=0,1)$. It is impossible to sample from the joint distributions of $(Y_i(0),Y_i(1))$ marginally or given $X_i=x$, because an individual cannot be in both the treatment group and the control group simultaneously. So we cannot generally identify $\pibt$. Not even if we are able to perfectly match individuals in opposite treatment groups based on pre-treatment covariates. This is what is known as the fundamental problem of causal inference.

The goal nonetheless is to reason about $\pibt$ through estimators for bounds on this quantity. Our focus is on deriving closed-form, non-asymptotic margins of error $(\varepsilon)$ on the bound estimators for an overall confidence band on $\pibt$ of the form: with a desired frequentist confidence level, $\pibt$ is contained between  
\begin{equation}\label{eqn:interpretWordsL}
\parenth{\text{estimator for lower bound on $\pibt$}}-\varepsilon\text{ and }\parenth{\text{estimator for upper bound on $\pibt$}}+\varepsilon.
\end{equation}


\subsection{Overview of our contributions}

For the bounds on the marginal probability an individual benefits from treatment which are estimated with data from a randomized experiment or observational study with known treatment probabilities, we derive in Section \ref{sec:margITEMakBnds} a closed-form concentration inequality depending on only the sample size and the desired frequentist confidence level. As discussed in Section \ref{sec:powerAnalysisMargPIBTCase}, this allows for a formal statistical power analysis, albeit conservative, but notably without the requirement of an asymptotic limiting distribution nor the specification of any unknown parameters (e.g. plausible effect sizes). Our main results are presented in a general manner in terms of sub-groups, delineated by pre-treatment features, and estimators for bounds on $\pibt$ based on inverse probability of treatment weighting (IPTW) \citep{imbens_rubin_2015,hernan2020}. The inference on $\pibt$ with a randomized experiment is handled as a specific case.

Different from the non-asymptotic margin of error that can be obtained with bootstrap re-sampling \citep{Efron1994-oc,bickelEtAl1997}, our non-asymptotic margin of error will be closed-form and simultaneous for all thresholds $\delta$ that can be used to define $\pibt$, thus allowing for a form of sensitivity analysis on its definition, and even cherry-picking. Also different from a margin of error that can be obtained via Bootstrap, our approach can be up to an order $M$ (the number of bootstrap samples) faster; see Remark \ref{rem:compComplex} for details on the relatively low computational complexity of our approach. 

We include in Section \ref{sec:criteoUpliftExample} an example application to a real-life randomized experiment dataset, Criteo AI Lab's benchmark data for uplift prediction \citep{CriteoUpliftDiemert2018}. In particular, this section points toward a useful combination of conditional average treatment effect ($\cate$) estimation and the inference we provide for $\pibt$ in this work: through a partitioning of individuals in a sample based on their similar $\cate$ prediction, we can estimate bounds on $\pibt$ in each of these strata to better understand the implication of the $\cate$ estimate. This is related to work on causal analysis involving the stratification on an interpretable score, such as a prognostic score, propensity score, or other diagnostic score \citep{Abadie_endogStrat2018,padilla2021causal,yeChenPadilla2021Scores,YadlowskyWagerTrtmntPrioritizationRules2021}. It is also related to recent efforts to provide valid and benchmarked causal inferences \citep{validatingCauslInfMethods22,calibrationHeteroTEs_XuYadlowsky2022,benchmarkHetTrtEffInterpretability2022,validInferenceAfterDiscovery}.


For the case that we would like to consider heterogeneity in $\pibt$ beyond the sub-group level as we consider in Section \ref{sec:margITEMakBnds}, Proposition \ref{prop:CIsGaussResids} provides an application of general results in Appendix \ref{sec:condITEMakBnds} (see extended discussion there) to the canonical linear regression model. Moreover, the Appendix \ref{sec:whenWorks} is an extended discussion on the scope of our results. For binary potential outcomes, we show in Proposition \ref{prop:boolFrechVsMakarov} that the general approach we take to bound $\pibt$ at the population-level using \citet{makarovBounds} is equivalent to using the sharp Boole-Fr\'echet bounds \citep{FrechetMaurice1935Gdtd,Rschendorf1981SharpnessOfFrechetBoole}. Our results also extend to the case when we define benefiting from treatment in terms of the ratio of an individual's two positive potential outcomes as discussed in Section \ref{sec:diffDefnRatio}. This section also discusses how our results can easily extend to reasoning about the proportion of individuals that are harmed by an intervention \citep{kallus2022_propHarmed}.

Appendix \ref{append:proofsMainResults} contains all the proofs of the main results presented here. 
\subsubsection{Existing work: bounding the distribution of individual treatment effects\label{sec:boundingITEDistr}}

The contribution of the present work, relative to the contribution of \citet{fanSooPark2010} who discuss inference for only the randomized experiment setting, is the concentration inequality for the $\pibt$ bound estimators. Under regularity conditions, \citet{fanSooPark2010} show asymptotically that the plug-in bound estimators follow either a normal distribution (centered at the target bound), or a truncated normal distribution, or a point mass. Exactly which distribution this is depends on the supremum difference between the two potential outcomes' cumulative distribution functions ($\cdf$s), which is unkown. Even if we know that the asymptotic distribution of the estimator is Gaussian, a prospective power analysis further requires an estimator for the standard error to guarantee a target confidence level and margin of error (e.g. a maximum deviation of 0.05), but such an estimator is not discussed in \citet{fanSooPark2010}. This points to a strength of our main concentration result in the randomized experiment setting: despite the possibility that the plug-in estimator can have a non-trivial, possibly biased, sampling distribution in a finite sample, the confidence level we can have for a target margin of error depends only on sample size. The discussion around Proposition \ref{prop:mainResultMargPIBT} and Fig. \ref{fig:powerAnalysisMargPIBTCase} gives more details. 

\citet{fayMannU2018} also discuss the statistical inference technique of \citet{fanSooPark2010} in conjunction with the quantity $1-\eta(\delta)$ in Definition \ref{defn:mannU}. Interestingly, it has been established that the Makarov bounds for the marginal $\cdf$ of $Y_i(1)-Y_i(0)$ studied in \citet{fanSooPark2010} are point-wise but not uniformly sharp \citep{firpo2010,FIRPO2019210}. {} While promising, we consider the estimation of these tightened bounds beyond the scope of this paper as it is not immediately clear that it is amenable to our analysis. For continuous outcomes in a randomized experiment setting, \citet{frandsenPositiveCor2021} works under a condition known as mutual stochastic increasing-ness of the potential outcomes $(Y_i(0),Y_i(1))$ \citep{Lehman1966}.{}{} The plug-in estimation approach we use in the randomized experiment case does not make the assumption of positive correlation: it works for any joint distribution on $(Y_i(0),Y_i(1))$ \citep{fanSooPark2010}, including those with any type of negative association.{} Also in the context of a randomized experiment, \citet{Caughey2021RandomizationIB} study $\pibt$ under a randomization inference setup that is traditionally used to test the sharp null hypothesis that all individual treatment effects are constant \citep{Fisher1935}.{} The approach we take to bound $\pibt$ assumes the existence of an infinite super-population that subjects in our sample at hand are drawn independent and identically distributed from and for which our plug-in estimators provide inference for. \citet{Caughey2021RandomizationIB} appears to be a nice alternative under the differing assumption that randomness is solely due to random assignment of subjects to a treatment.

Of special note, the quantity $\theta(\delta)$ in Definition \ref{defn:propBenefit} when $0\leq\delta<1$ is equivalent to what is known as the ``probability of necessity and sufficiency ($\pns$)'' if $Y_i(0)$ and $Y_i(1)$ only take on binary values \citep{Pearl1999-sh,Tian2000-ul,pearlCausality,muellerLiPearl2021PNSCovariates}. In this case, $\pns$ and what we call ``marginal $\pibt$'' are given by the joint probability $\mathrm{pr}\{Y_i(1)=1,Y_i(0)=0\}$. As suggested by the intriguing use of prepositional logic terminology in its name, $\pns$ informs us of an intervention's effectiveness at achieving a strictly better outcome.{} See also \citet{Cuellar2018-tk}, \citet{Cinelli2021-id}, \citet{dawidMusio2022}, and \citet{Cuellar2022} for recent discussions on related quantities, meticulously defined as conditional probabilities of one potential outcome given the other, that inform about treatment's efficacy. 

The use of IPTW to bound PNS when it is defined for inferring harm from an intervention is considered in \citet{kallus2022_propHarmed}. Relative to their handling of PNS (binary $\pibt$), our results are for both real-valued and binary-valued outcomes. We discuss the sharpness of our target bounds for the binary case in Appendix \ref{sec:whenWorks}. Moreover, our non-asymptotic inference is complementary to the invocation of the Central Limit Theorem in \citet{kallus2022_propHarmed}.

Our work is also complementary to inference on the inverse of $\pibt$ \citep{fanSooPark2012,conformalITE2021,YinShiWangBlei2021SensitivityITE,JinRenCandes2021SensitivityITE}, such as with quantile regression \citep{regressionQuantilesKoenkerBasset1978,romanoConformal2019}.


{}

\subsection{Assumptions}

We now state the conditions we will work with. Throughout this work, we will assume the stable unit treatment value assumption, commonly abbreviated as SUTVA, in Assumption \ref{assump:sutva}. We will also assume consistency of the observed outcome throughout as given in Assumption \ref{assump:consistency}. Moreover, our inference is valid under conditional ignorability conditions in Assumption \ref{assump:strongCondIgno}.

\begin{assumption}[Stable Unit Treatment Value Assumption\label{assump:sutva}]

We quote \citet{imbens_rubin_2015}: ``The potential outcomes for any unit do not vary with the treatments assigned to other units, and, for each unit, there are no different forms or versions of each treatment level, which lead to different potential outcomes.''
\end{assumption}

\begin{assumption}[Consistency\label{assump:consistency}]

The observed outcome is dictated by treatment receipt indicator $W_i$:
\[
Y_i:=W_i Y_i(1)+(1-W_i)Y_i(0).
\]
\end{assumption}






\begin{assumption}[Strong Conditional Ignorability\label{assump:strongCondIgno} \citep{Rubin1974EstimatingCE,imbens_rubin_2015}]

For $i=1,\dots,n$, we have that:
\[
\curl{Y_i(0),Y_i(1)}\indep W_i\mid X_i.
\]
Let $e(x):=\prc{W_i=1}{X_i=x}$--the propensity score. We have further that 
\[
0 < \underbar{e}:=\inf_x e(x)\text{ and }\bar{e}:=\sup_x e(x) < 1.
\]
\end{assumption}

\section{$\pibt$ bounds in sub-groups\label{sec:margITEMakBnds}}
\subsection{Overview}
Here, we aim to estimate bounds on \[
\theta(\delta,x):=\prc{Y_i(1)-Y_i(0)>\delta}{\ell(X_i)=\ell(x) }.
\]
Here, $\ell(x)=l$ if $x\in\xcal_l$, where
\[
\xcal_l\subseteq \xcal:=\text{support}(X_i)\ (l=1,\dots,L;\ L\geq 1).
\]
We consider the mapping $\ell(\cdot)$ to be fixed. However, among other possibilities, $\ell(\cdot)$ could be the partition given by a decision tree \citep{Breiman2017-ay,grfAnnals,policytree2021,poliTreeSofware} or uniform mass binning \citep{unifMassBinning2001,Abadie_endogStrat2018,verifiedUncerCalib2019,distrFreeBinaryReg2020,pmlr-v139-gupta21b} during a pre-processing step with independent data. See \S~\ref{sec:criteoUpliftExample} below for an example. We may also trivially take $L=1$ as a special case, like we do in Figure \ref{fig:powerAnalysisMargPIBTCase}.

In order for us to identify these bounds, we will work under Assumption \ref{assump:strongCondIgno}. Denote \citet{makarovBounds}'s pointwise sharp lower bound and upper bound on $\pibt$ as $\theta^L(\delta,x)$ and $\theta^U(\delta,x)$, which are such that
\[
\theta^L(\delta,x)\ \leq\ \theta(\delta,x)\ \leq\ \theta^U(\delta,x).
\]
Denote their corresponding estimators based on independent and identically distributed data as $\hat{\theta}^L(\delta,x)$ and $\hat{\theta}^U(\delta,x)$, respectively. Proposition \ref{prop:mainResultMargPIBT} is the main result in this section, providing a guarantee for the accuracy of these estimators, which we now formally define. Remark \ref{rem:compComplex} discusses the satisfactory computational complexity of our procedure, while Section \ref{sec:powerAnalysisMargPIBTCase} discusses how to use Proposition \ref{prop:mainResultMargPIBT} for sample size deterimination (power analysis). Moreover, Section \ref{sec:margCaseSimulations} of the appendix contains a simulation study.

\subsection{The target bounds on $\pibt$ and their estimators}

We refer the reader to Lemma \ref{lem:makBounds} in Appendix \ref{append:makBoundsDiscussion} for the formal statement of the point-wise sharp Makarov bounds \citep{makarovBounds,WILLIAMSON199089} we make use of to bound $\pibt$. The bounds $\theta^L(\delta)$ and $\theta^U(\delta)$, our target parameters, are in terms of the potential outcomes' marginal $\cdf$s. The $\cdf$ of $Y_i(w)$ in sub-group $\ell(x)$ is: 
\[
F_w(y|x):=\prc{Y_i(w)\leq y}{\ell(X_i)=\ell(x) },
\]
which is identified under Assumption \ref{assump:strongCondIgno}. Denote the empirical cumulative distribution function ($\ecdf$) for $Y_i(w)$, an unbiased plug-in estimator for $F_w(y|x)$, as $(w=0,1)$:
\begin{equation}\label{eqn:eCDFmarg}
\hat{F}_{wn}(y|x):=\frac{1}{n_{\ell(x )}}\sum_{i:\ \ell(X_i)=\ell(x)}\frac{\indic{Y_i\leq y} \indic{W_i=w} }{ we(X_i)+(1-w)[1-e(X_i)] }.
\end{equation}
Here, $\indic{\cdot}$ is the indicator function, while
\[
n_{\ell(x )}:=\#\curl{i:\ \ell(X_i)=\ell(x) },
\]
the number of units in sub-group $\ell(x)$. Unbiasedness of $\hat{F}_w(y|x)$ follows by noting that $\hat{F}_w(y|x)$ is an inverse probability of treatment weighted (IPTW) estimator \citep{imbens_rubin_2015,hernan2020} for the conditional mean of an indicator:
\[
F_w(y|x)=\EC{\indic{Y_i(w)\leq y}}{\ellxi = \ellx }.
\]

Using Lemma \ref{lem:makBounds} in the Appendix \ref{append:makBoundsDiscussion}, the target parameters to bound $\theta(\delta,x)$ across any joint distribution of $\curl{Y_i(0),Y_i(1)}\mid\ellxi=\ell(x)$ are:
\[\begin{aligned}
&{\theta}^{L}(\delta,x)&=\ &-\min\left[\inf_y\left\{F_1(y+\delta/2\mid x)-F_0(y-\delta/2\mid x)\right\},0\right]\\
\end{aligned}\]
for the lower bound, while for the upper bound we have:
\[\begin{aligned}
&{\theta}^{U}(\delta,x)&=\ &1-\max\left[\sup_y\left\{ F_1(y+\delta/2\mid x)-F_0(y-\delta/2\mid x) \right\},0\right].\\
\end{aligned}\]
Correspondingly, we can obtain the bound estimators by plugging in the $\cdf$ estimators in analogy to \citet{fanSooPark2010} who consider only the randomized experiment case: 
\begin{equation}\label{eqn:lwrEstRE}
\hat{\theta}^{L}(\delta,\mid x):=-\min\left[\inf_y\curl{\hat{F}_{1n}(y+\delta/2\mid x)-\hat{F}_{0n}(y-\delta/2\mid x)},0\right]
\end{equation}
and
\begin{equation}\label{eqn:uprEstRE}
\hat{\theta}^{U}(\delta,x):=1-\max\left[\sup_y\curl{\hat{F}_{1n}(y+\delta/2\mid x)-\hat{F}_{0n}(y-\delta/2\mid x)},0\right]
\end{equation}
are the lower bound and upper bound estimators, respectively. 

\begin{remark}[Computational complexity\label{rem:compComplex}]
For the computation of the lower and upper bound estimates at a point $x$, the number of total iterations across elements in two lists of size $n_{0\ell(x)}$ and $n_{1\ell(x)}$, respectively, is
\[
O[\parenth{n_{0\ell(x)}+n_{1\ell(x)} }\{\log \parenth{n_{0\ell(x)}}+\log \parenth{n_{1\ell(x)}} \}].
\]
See Appendix \ref{append:compComplex} for details. Here, $n_{w\ell(x)}$ $(w=0,1)$ is the number of units in treatment group $w$ and sub-group $\ell(x)$.

\end{remark}


\subsection{The main result}
Given our estimators $\{\hat{\theta}^L(\delta,x),\hat{\theta}^U(\delta,x)\}$, we consider now their accuracy. Proposition \ref{prop:mainResultMargPIBT} provides us with this understanding. Here, the symbol $\vee$ denotes the maximum between the left and right arguments. 

\begin{proposition}[Inference on $\pibt$ through bounds\label{prop:mainResultMargPIBT}]\ \\
Let Assumption \ref{assump:strongCondIgno} hold.\\
\textbf{Case 1:} Consider evaluating the bound estimators at a random new test point, $\xnp$, independent of our training data. 

With probability at least $1-\alpha$, we have:
\[\begin{aligned}
& \sup_\delta \curl{\abs{ \hat{\theta}^L(\delta,\xnp)-\theta^L(\delta,\xnp) }\vee\abs{ \hat{\theta}^U(\delta,\xnp)-\theta^U(\delta,\xnp) }}\\
\leq\ &n_{\ell(\xnp )}^{-1/2}\sum_{w=0,1} C_{w,\alpha/2}(\xnp).\end{aligned}\]

\textbf{Case 2:} Now consider a uniform guarantee on the bound estimators across $x\in\xcal$.

Let $x_l\in\xcal_l$ be arbitrary ($l=1,\dots,L$). With probability at least $1-\alpha$, we have:
\[\begin{aligned}
&\sup_{\delta,x} \curl{\abs{ \hat{\theta}^L(\delta,x)-\theta^L(\delta,x) }\vee\abs{ \hat{\theta}^U(\delta,x)-\theta^U(\delta,x) }}\\
\leq\ &\max_l \curl{ n_l^{-1/2}\sum_{w=0,1}  C_{w,\alpha/(2L)}(x_l)}.
\end{aligned}\]

\end{proposition}

In Proposition \ref{prop:mainResultMargPIBT}, we define $C_{w,\beta}$, where $\beta=\alpha/2$ (Case 1) or $\beta=\alpha/(2L)$ (Case 2), as follows.
\begin{itemize}
\item\textbf{Completely Randomized Experiment in sub-group $\ell(x)$}: Suppose $\ell(x)=l$. If $e(x')=n_{1l}/n_l$ for each $x'\in \xcal_l$, we have:
\[
C_{w,\beta}(x)\ \dot{=}\ \parenth{2^{-1}n_{\ell(x )} n_{w\ell(x)}^{-1}\log(2/\beta)}^{1/2}.
\]
\item\textbf{General case with known propensity scores}: We have:

\[
C_{w,\beta}(x)\ \dot{=}\ \frac{2\sqbrack{ 2\log^{\frac{1}{2}}\parenth{ n_{\ell(x)}+1 }+\log^{\frac{1}{2}}(1/\beta)}}{ n_{\ellx} ^{1/2}\parenth{ w\underbar{e}+(1-w)(1-\bar{e}) } }.
\]
\end{itemize}
The exact choice of these values is explained in Lemma \ref{lem:eCDFGuarantee}, a key result that makes use of the Dvoretzky-Kiefer-Wolfowitz-Massart (DKWM) inequality \citep{DKWIneq1956,massartDKWIneq1990,NaamanDKWIneq2021,overlap2021}, a functional Hoeffding's theorem \citep{wainwright_2019}, and techniques from empirical process theory \citep{wainwright_2019,vanderVaart1996}.

In Remark \ref{rem:practicalInterpretation}, we can see one practical implication of Proposition \ref{prop:mainResultMargPIBT}. In Section \ref{sec:powerAnalysisMargPIBTCase}, we further discuss its practical implication with respect to a statistical power analysis. In \S~\ref{sec:criteoUpliftExample}, we apply Case 2 of Proposition \ref{prop:mainResultMargPIBT}.

\begin{remark}\label{rem:practicalInterpretation}
Using Proposition \ref{prop:mainResultMargPIBT}, we can say that with confidence at least $(1-\alpha)\times100\%$, the probability an individual represented by our study will benefit from treatment,
\[
\prc{ Y_i(1)-Y_i(0)>\delta }{ \ell(\xnp) }
\]
for any threshold $\delta$ of interest, is between
\[
\hat{\theta}^L(\delta,\xnp)-n_{\ell(\xnp) }^{-1/2}\sum_{w=0,1} C_{w,\alpha/2}(\xnp)
\]
and
\[
\hat{\theta}^U(\delta,\xnp)+n_{\ell(\xnp) }^{-1/2}\sum_{w=0,1} C_{w,\alpha/2}(\xnp).
\]

\end{remark}

\subsection{A power analysis with Proposition \ref{prop:mainResultMargPIBT}\label{sec:powerAnalysisMargPIBTCase}}

\begin{figure}[ht]
    \centering
    \includegraphics[width=0.45\textwidth]{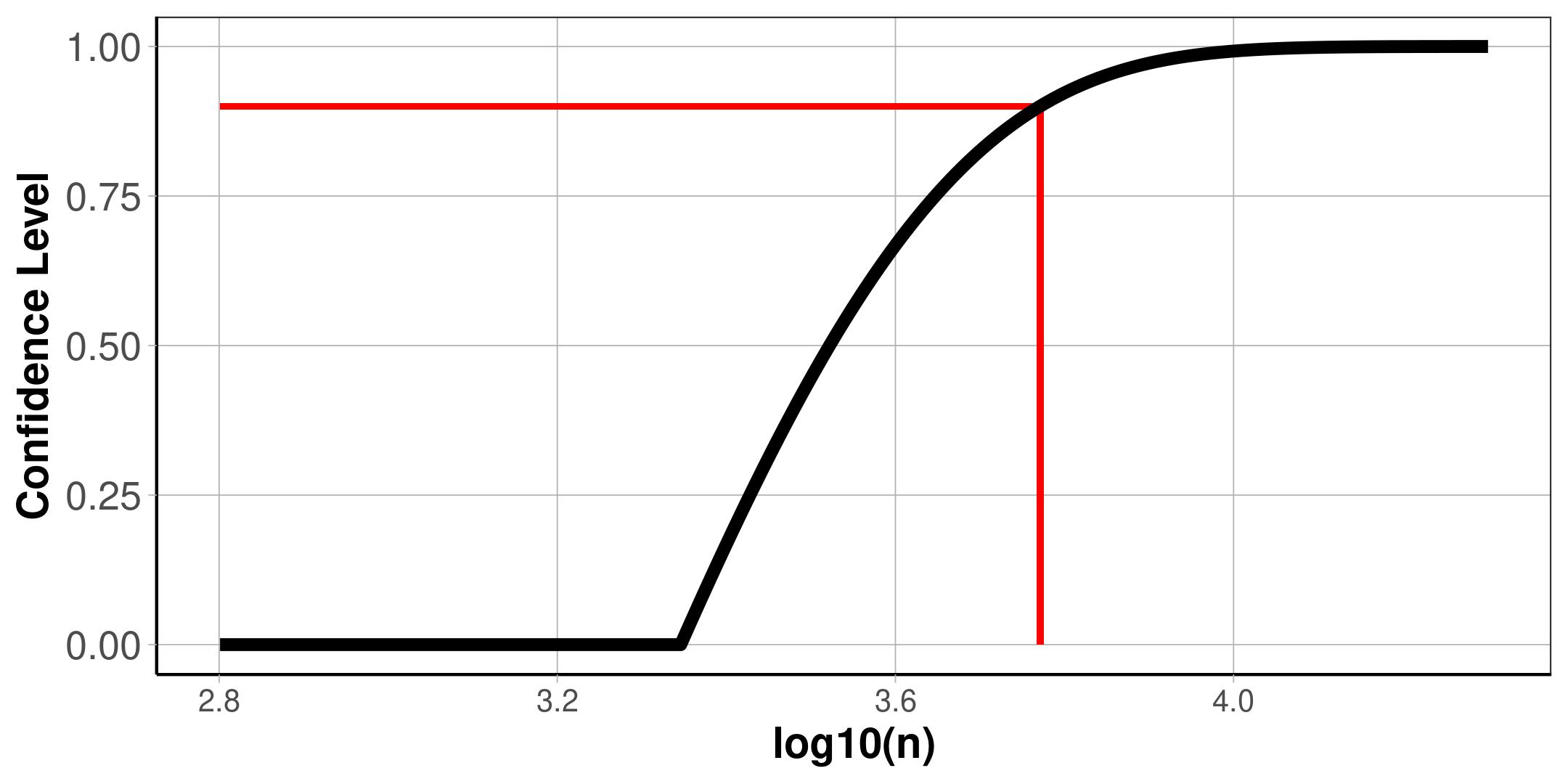}
    \caption{An application of Proposition \ref{prop:mainResultMargPIBT}. Here, $L=1$, $n_0=n_1$, and  $n=n_0+n_1$. The target margin of error is $0.05$. The black curve, given by \eqref{eqn:sig_level_plot}, corresponds to the possible confidence level we may have at a given $n$. From the two red line segments, we see that a sample size of $n\geq 5902$ guarantees the target margin of error with at least a 90\% confidence level.}
    \label{fig:powerAnalysisMargPIBTCase}
\end{figure}

Figure \ref{fig:powerAnalysisMargPIBTCase} presents an example power analysis making use of Proposition \ref{prop:mainResultMargPIBT} for the completely randomized experiment case. More generally, suppose our target margin of error for
\[
\sup_{\delta}\curl{\abs{ \hat{\theta}^{L}(\delta,x)-\theta^{L}(\delta,x) }\vee\abs{ \hat{\theta}^{U}(\delta,x)-\theta^{U}(\delta,x) }}
\]
in the completely randomized experiment case is $0<\varepsilon<1$. Solving for the significance level $\alpha_\varepsilon$ when we set $\varepsilon$ equal to the margin of error in Proposition \ref{prop:mainResultMargPIBT}:
\[
\varepsilon\ \dot{=}\ \curl{\frac{\log(4/\alpha_{\varepsilon} )}{2} }^{1/2}\parenth{n_{0\ell(x)}^{-\frac{1}{2}}+n_{1\ell(x)}^{-\frac{1}{2}} }
\]
gives:
\begin{equation}\label{eqn:sig_level_plot}
\alpha_{\varepsilon} = 4\exp\curl{ -2 \parenth{n_{0\ell(x)}^{-\frac{1}{2}}+n_{1\ell(x)}^{-\frac{1}{2}} }^{-2} \varepsilon^2 }.
\end{equation}
The confidence level we can thus have for the target margin of error $\varepsilon$ at any given sample size $(n_{0\ell(x)},n_{1\ell(x)} )$ is at least:
\[
\begin{cases}
1-\alpha_{\varepsilon}&\text{ if }0\leq\alpha_{\varepsilon}\leq 1\\
0&\text{ otherwise.}
\end{cases}
\]

\section{Application: Understanding the implication of conditional average treatment effects\label{sec:criteoUpliftExample}}
\subsection{One practical definition of sub-group level $\pibt$ and its bounds}
Using pre-treatment covariates, $X_i$, we will study heterogeneity and the implication of an average treatment effect estimate as follows. We first learn the conditional average treatment effect ($\cate$) function, 
\[
\tau(x):=\mathbb{E}\curl{Y_i(1)-Y_i(0)\mid X_i=x}.
\]

We then partition the feature space based on similar $\cate$ predictions through uniform mass binning\citep{unifMassBinning2001,Abadie_endogStrat2018,verifiedUncerCalib2019,distrFreeBinaryReg2020,pmlr-v139-gupta21b} as follows. Denote the quantiles of $\hat{\tau}(X_i)$ as 
\[
q_\alpha:=\inf\curl{q:\ {\rm pr}\curl{\hat{\tau}(X_i)\leq q}\geq\alpha}\ (0\leq\alpha\leq1).
\]
We will obtain estimators, $\hat{q}_\alpha$ using a second fold of randomly sampled rows. Now, define the discrete mapping $s_m:\ \mathcal{X}\to\{1,\dots,m\}$ as:
\[
s_m(x) = k\text{ if }\hat{q}_{(k-1)/m}<\hat{\tau}(x)\leq\hat{q}_{k/m};\ k\in\{1,\dots,m\}.
\]
The value $m\geq2$ corresponds to the number of sub-groups that are created in the partitioning. 

Finally, we estimate bounds on $\pibt$ conditional on the partition, defined as
\begin{equation}\label{eqn:trtmntBeneficStrat}
\theta(\delta,k):={\rm pr}\left\{Y_i(1)-Y_i(0)>\delta \mid s_m(X_i)=k\right\}.
\end{equation}
The parameter $\theta(\delta,k)$ is the probability that treatment is beneficial in $\cate$ stratum $k=1,\dots,m$. When treatment is randomized independently of the baseline covariates $X_i$, as we assume in this section, we have the ignorability statement \[\curl{Y_i(1),Y_i(0)}\indep W_i\mid s_m(X_i)=k\ (k=1,\dots,m).\] This allows us to identify the Makarov lower and upper bounds on $\theta(\delta,k)$, which we will denote as $\theta^L(\delta,k)$ and $\theta^U(\delta,k)$, respectively. The mappings $s_m(\cdot)$ allow us to stratify on an interpretable univariate score. For our demonstrative choice of the mapping $s_m(\cdot)$, if the learned function CATE function $\hat{\tau}(\cdot)$ is indicative of benefiting from treatment, we would like to see a monotone increasing relation between $k=1,\dots,m$ and the bound estimators $\{\hat{\theta}^L(\delta,k),\hat{\theta}^U(\delta,k)\}$ as supporting evidence. 


\subsubsection{Criteo's uplift prediction benchmark dataset}


\begin{figure}[ht]
    \centering
    \includegraphics[width=0.45\textwidth]{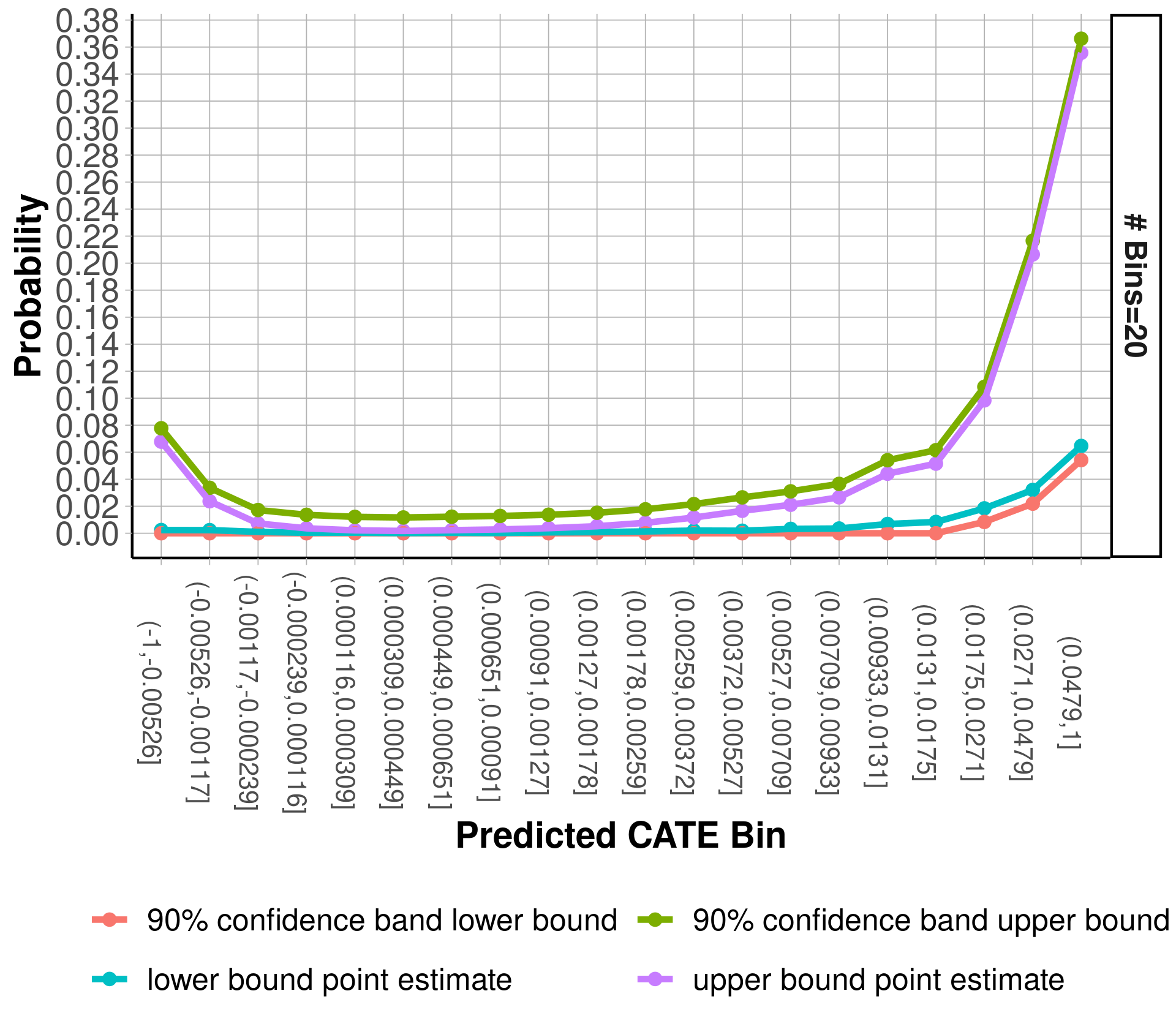}
    \caption{The 90\% Bonferroni corrected lower and upper confidence bands on $\pibt$ across bins of $\cate$ predictions on the Criteo uplift dataset.}
    \label{fig:criteoByBin}
\end{figure}

We now present an application to Criteo AI Lab's uplift prediction benchmark dataset \citep{CriteoUpliftDiemert2018}. According to the webpage that hosts the data,
\begin{quote}
This dataset is constructed by assembling data resulting from several incrementality tests, a particular randomized trial procedure where a random part of the population is prevented from being targeted by advertising. It consists of 25M rows, each one representing a user with [12] features, a treatment indicator and 2 labels (visits and conversions).
\end{quote}
For this application, the proportion we will estimate bounds for should be understood in plain language as the proportion of the time that the advertiser benefits from presenting an advertisement on the website, rather than the probability an individual benefits from treatment. Equation \eqref{eqn:trtmntBeneficStrat} below is the formal statement of this proportion. 

The available down sampled data consists of 13,979,592 observations. We focus on the effect treatment assignment, rather than treatment receipt, has on visits, making this an intent-to-treat analysis. The outcome of interest in our analysis is the indicator for whether a user visited the advertiser website during the test period of 2 weeks. 

With 50,000 randomly sampled rows, we learn $\cate$ through \texttt{grf::causal\_forest} with default options in \texttt{R} \citep{grfAnnals}. Next, we obtain estimators for the $m-1$ quantiles of the random variable $\hat{\tau}(X_i)$ using a second fold of 8,987,000 rows. The bounds are then estimated within each bin in the second fold.

Restricting $\delta\in(0,1)$ and noting that the outcome is binary, Figure \ref{fig:criteoByBin} provides a 90\% Bonferroni corrected confidence band on 
\[
\theta(\delta,k)={\rm pr}\curl{Y_i(1)=1,Y_i(0)=0\mid s_m(X_i)=k}
\]
that allows for simultaneous inference across $\cate$ prediction bins $k=1,\dots,20$. The practical insight is that for an individual such that $0.0479<\hat{\tau}(X_i)\leq 1$, the probability this individual will visit the advertiser's web-page when assigned treatment and otherwise not visit the website if untreated, is between 5.42\% and 36.62\% with 90\% confidence. Given the Bonferroni correction and that all bins have equal amounts of subjects, we take an average of the upper confidence bound for every bin corresponding to a $\cate$ prediction of $0.0271$ or less. This gives that an individual with a $\cate$ prediction of $0.0271$ or less has a joint probability of interest of 4.18\% or less with 90\% confidence. 



\section{Discussion}

Promising directions related to the work presented here include applying the results in Appendix \ref{sec:condITEMakBnds} to a setting more general than that demonstrated in Proposition \ref{prop:CIsGaussResids}. Alternatively, we may like to tailor these results to model specifications that are sufficient for interesting applications, such as involving generalized linear models \citep{McCullagh2019-qz,modernHighDLogisRegr_SurCandes2019}. For settings where comparing an individual's two potential outcomes in a ratio rather than a difference can provide interesting insight, such as studies where time to an event is of interest \citep{coxRegressionHazardRatio_1972,Stitelman2010_TTE_TMLE,Austin2014_TTE,Schober2018_TTE,Cai2020_TTE_TMLE}, it seems worthwhile to extend the discussion in Appendix \ref{sec:diffDefnRatio}.

Our probability bound formulation to reason about treatment effects, rather than the more common average formulation, is similar to recent model-free work that moves beyond an average in favor of controlling the probability of a type I error, or a false discovery, using a Neyman-Pearson paradigm \citep{TongFengLi2018_NPClassification,LiChenTong2021_NPFeatureRanking}. Future work relating our present work to these advances could be of interest, for example as it relates to causal classification \citep{causalClassification2022} and mitigating harm from an intervention \citep{kallus2022_propHarmed}. This direction may involve more rigorously choosing the appropriate threshold $\delta$ to define $\pibt$ in Definition \ref{defn:propBenefit}, in accordance with an appropriately formulated Neyman-Pearson objective. Recalling our uniform control across threshold values, we currently propose that $\delta$ be anything, or everything, a practitioner finds reasonable.

\section*{Acknowledgement}
Gabriel would like to thank NSF DGE-1650604 for financial support, Qing Zhou for encouragement on this project, and members of the UCLA causal inference reading group for their feedback.

\bibliography{references}
\bibliographystyle{apalike}


\newpage
\appendix
\onecolumn


\section{The Makarov bounds\label{append:makBoundsDiscussion}}
\subsection{Formal Statement}
We present the bounds as stated in \citet{WILLIAMSON199089}'s {Theorem 2}.
\begin{lemma}[The Makarov Bounds \label{lem:makBounds}]

Uniformly across all possible, unknown joint distributions 
 \[
 (V_1,V_2)\sim \pr{V_1\leq v_1,V_2\leq v_2}
 \]
 having fixed marginal $\cdf$s, $F_1(v_1)=\pr{V_1\leq v_1}$ and $F_2(v_2)=\pr{V_2\leq v_1}$, the $\cdf$ of $V_1-V_2$ evaluated at $\delta\in\mathbb{R}$ satisfies:
\begin{equation}
\begin{aligned}
F^L(\delta)&\leq\ && \pr{V_1-V_2\leq \delta}&\leq\ &F^U(\delta),\\
\end{aligned}
\end{equation}
where 
\[\begin{aligned}
F^L(\delta)&=\ & -\max\left[\sup_{a,b:\ a+b=\delta } \{F_1(a)-F_2(-b)\},0\right]
\end{aligned}\]
and
\[\begin{aligned}
F^U(\delta)&=\ &1+\min\left[ \inf_{a,b:\ a+b=\delta } \{F_1(a)-F_2(-b) \},0\right].
\end{aligned}\]

\end{lemma}

Lemma \ref{lem:makBounds} was first proved in \citet{makarovBounds} to bound the distribution of a sum of two random variables. We present this result for subtraction, which is a simple extension, as $V_1-V_2$ is technically the sum of two random variables $V_1$ and $(-V_2)$. Lemma \ref{lem:makBounds}'s proof was later extended in \cite{Frank1987BestpossibleBF} and \citet{WILLIAMSON199089}, who also seek bounds on the distribution of other binary operations on $V_1$ and $V_2$, like their difference, product, and their ratio, under minimal distributional assumptions.

\subsection{Equivalent forms of the bounds}

The terms $F^L(\delta)$ and $F^U(\delta)$ in Lemma \ref{lem:makBounds} can be rewritten. 

Consider a one-to-one change of variables $(a,b)\mapsto(u+\delta/2,-u+\delta/2)$. With it, we have equivalently:
\[
F^L(\delta)\ =\ -\max\left[\sup_{u} \{F_1(u+\delta/2)-F_2(u-\delta/2)\},0\right]
\]
along with
\[
F^U(\delta)\ =\ 1+\min\left[ \inf_{u} \{F_1(u+\delta/2)-F_2(u-\delta/2)\},0\right].
\]
This is in line with what we have written in Section \ref{sec:margITEMakBnds} and \ref{sec:condITEMakBnds} of the main text.

Consider instead the one-to-one change of variables $(a,b)\mapsto(u,-u+\delta)$. With it, we have equivalently:
\[
F^L(\delta)\ =\ -\max\left[\sup_{u} \{F_1(u)-F_2(u-\delta)\},0\right]
\]
along with
\[
F^U(\delta)\ =\ 1+\min\left[ \inf_{u} \{F_1(u)-F_2(u-\delta)\},0\right].
\]
This is in line with \citet{fanSooPark2010}'s {Lemma 2.1} and {Equation (2) and (3)}, and it also agrees with the alternative form given in {Equations (21) and (22)} of \citet{WILLIAMSON199089}. 

Moreover, it is straightforward to see that:
\[
1-F^U(\delta)\ \leq\ \pr{U_1-U_0>\delta}\ \leq\ 1-F^L(\delta).
\]
We make use of this in the main text when bounding $\pibt$. 

\section{Appendix to Section \ref{sec:margITEMakBnds}}

\subsection{Computational Complexity\label{append:compComplex}}

The reasoning for the computational complexity claim in Remark \ref{rem:compComplex} is as follows. Consider the mapping 
\[
\hat{G}_{n}(y,\delta,x):=\hat{F}_{1n}(y+\delta/2)\mid x-\hat{F}_{0n}(y-\delta/2\mid x).
\]
The bound estimators in \eqref{eqn:lwrEstRE} and \eqref{eqn:uprEstRE} are each a functional of $\hat{G}_n(\cdot,\delta)$ for fixed $(\delta,x)$. The knots of $\hat{G}(\cdot,\delta,x)$ are given by the set 
\[
\mathcal{Y}_{\delta,x, n}:=\{Y_i(w)-\delta/2:\ W_i=0,\ell(X_i)=\ell(x)\}\cup\{Y_i(w)+\delta/2:\ W_i=1,\ell(X_i)=\ell(x)\}.
\]
This means that the infimum and supremum used to specify the bound estimators in \eqref{eqn:lwrEstRE} and \eqref{eqn:uprEstRE} are really a minimum and maximum, respectively, taken by indexing the finite set $\mathcal{Y}_{\delta,x, n}$.

Next, consider that sorting the list $\curl{Y_{i}:\ W_i=w,\ellxi=\ellx}$ has worst time computational complexity $O(n_w(x)\log(n_w(x)))$ with a procedure such as merge sort \citep{Dasgupta2006-Algorithms}. Given the lists $\curl{Y_{i}:\ W_i=0,\ellxi=\ellx}$ and $\curl{Y_{i}:\ W_i=1,\ellxi=\ellx}$ in sorted form, the evaluation of $\hat{G}(y,\delta,x)$ for one $y\in\mathcal{Y}_{\delta,x, n}$ is no worse than $O\curl{\log(n_0(x))+\log(n_1(x) )}$ with a procedure making use of binary search \citep{Dasgupta2006-Algorithms}. Evaluating $\hat{G}(y,x,\delta)$ at each point $y\in\mathcal{Y}_{\delta,x, n}$ then brings us to the stated overall computational complexity of $O[ \parenth{n_0(x)+n_1(x)}\curl{\log(n_0(x))+\log(n_1(x))} ]$. Finally, the minimum and maximum calculation on the list of evaluations $\{\hat{G}(y,x,\delta):\ y\in\mathcal{Y}_{\delta,x, n}\}$ add $O(n_0(x)+n_1(x))$ complexity, so we retain the overall complexity stated in Remark \ref{rem:compComplex}. 

\subsection{A simulation study\label{sec:margCaseSimulations}}

\begin{figure}[htbp]
  \centering
  \includegraphics[width=0.45\textwidth]{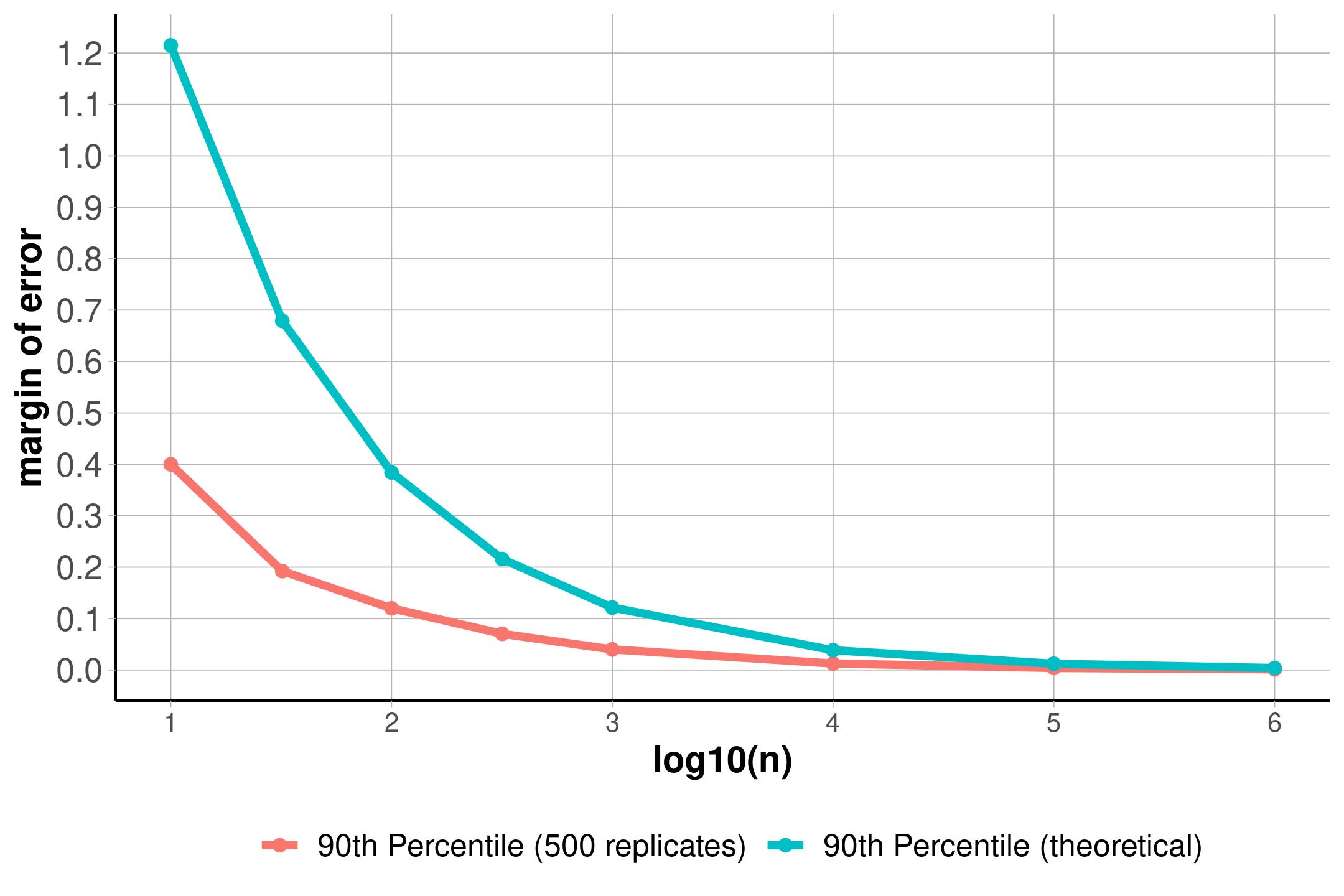}
    \caption{Across 500 replicates at each $n$, we compare the 90th percentile of the error defined in Equation \eqref{eqn:errorSims} against the margin of error given by Proposition \ref{prop:mainResultMargPIBT} at the 90\% confidence level. We hold the signal to noise ratio and propensity score fixed.}
    \label{fig:simsMargCase_MoreN_MoreDelta}
\end{figure}

\begin{figure}
    \centering
    \includegraphics[width=0.45\textwidth]{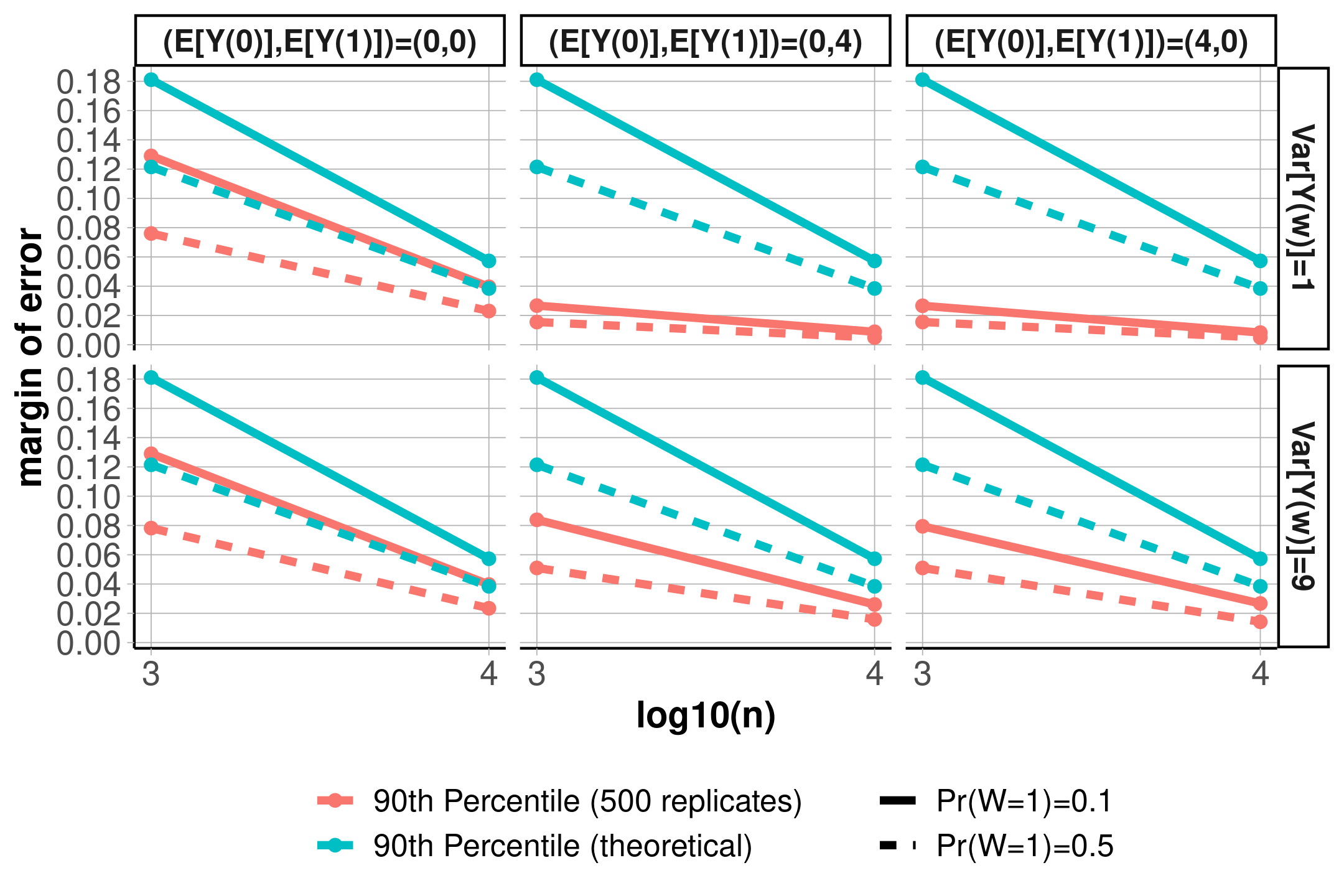}
    \caption{Across 500 replicates at each $n$ and $\delta$, we compare the 90th percentile of the error defined in Equation \eqref{eqn:errorSims2} against the margin of error given by Proposition \ref{prop:mainResultMargPIBT} at the 90\% confidence level. We fix $\delta=0$, vary sample size, the signal to noise ratio, and the propensity score.}
    \label{fig:DiffSig2Noise}
\end{figure}

We now conduct some simulations to demonstrate that Proposition \ref{prop:mainResultMargPIBT} provides the appropriate coverage. First, we estimate bounds on $\pibt$ at multiple thresholds $\delta$. Then we fix $\delta$ and estimate $\pibt$ based on synthetic data generated according to differing multiple signal to noise ratios and propensity scores. 


In the first setting, our data is generated according to \[
Y_i(w)\sim\mathcal{N}(0,1)\ (w=0,1)
\]
and $\pr{W_i=1}=0.5$. Moreover, our bound estimators are based on sample size and threshold given by their respective choices: $n= 10, \lceil{10^{1.5}\rceil},10^2,\lceil{10^{2.5}\rceil}, 10^3, 10^4, 10^5, 10^6$ and $\delta=0,\dots,5$. For each $n$, we keep track of the estimation error
\begin{equation}\label{eqn:errorSims}
\max_{\delta=0,\dots,5}\abs{\hat{\theta}^L(\delta)-{\theta}^L(\delta) }\vee\abs{\hat{\theta}^U(\delta)-{\theta}^U(\delta) }
\end{equation}
across $500$ replicates. Figure \ref{fig:simsMargCase_MoreN_MoreDelta} summarizes these results by comparing the 90th percentile of this error across the 500 replicates to the margin of error given by Proposition \ref{prop:mainResultMargPIBT} at the 90\% confidence level. As suggested by this figure, our joint coverage of $\theta^L(\delta)$ and $\theta^U(\delta)$ for each $\delta=0,1,2,3,4,5$ using Proposition \ref{prop:mainResultMargPIBT} appears to be conservative. In fact, across the 500 replicates for each $n$ we tried, the actual joint coverage is no less than 99.8\%.

{}

For the nominal coverage guarantee for the bound estimators using Proposition \ref{prop:mainResultMargPIBT}, we now seek to understand the influence of varying values of the average treatment effect, $\mu_1-\mu_0$, differing amounts of potential outcomes' variation, $\sigma^2$, and the propensity score, $\pr{W_i=1}$. We fix $\delta=0$ in our estimation of the bounds on $\pibt$. For this second synthetic data scenario, we generate the potential outcomes marginally according to:
\[
Y_i(w)\sim\mathcal{N}(\mu_w,\sigma^2).
\]
We consider $500$ replicates for every combination of simulation parameters denoted in Fig. \ref{fig:DiffSig2Noise} and keep track of the error
\begin{equation}\label{eqn:errorSims2}
\abs{\hat{\theta}^L(0)-{\theta}^L(0) }\vee\abs{\hat{\theta}^U(0)-{\theta}^U(0) }.
\end{equation}

As with the previous synthetic example, it seems that the theoretical margin of error for a 90\% confidence level given by Proposition \ref{prop:mainResultMargPIBT} is conservative compared to the 90th percentile of the bound estimates' margin of error. Again, the actual joint coverage of the bound estimators is at or very close to 100\%.

Recall that the margin of error in Proposition \ref{prop:mainResultMargPIBT} only depends on sample size and target confidence level. Correspondingly, the only change in the theoretical margin of error is when the propensity score changes between $0.1$ and $0.5$, corresponding to the bold-blue and dashed-blue lines in Fig. \ref{fig:DiffSig2Noise}, respectively. Observe now the red lines in Fig. \ref{fig:DiffSig2Noise}, corresponding to the 90th percentile of the estimation error across the 500 replicates. As expected, the propensity score also affects estimation error, with the balanced case of $\pr{W_i=1}=0.5$ (dashed-red lines) generally giving better results compared to the imbalanced case of $\pr{W_i=1}=0.1$ (bold-red lines). Denote signal to noise as 
\[
\abs{\mathbb{E}\curl{Y_i(1)-Y_i(0)}}/\sqbrack{{\rm var}\curl{Y_i(0)}+{\rm var}\curl{Y_i(1)}}^{1/2}.
\]
We see from the red lines in Fig. \ref{fig:DiffSig2Noise} a smaller estimation error when the signal to noise is $2^{3/2}$ compared to when it is $0$ or $2^{ 3/2} /3$.

\section{$\pibt$ bounds with pre-treatment covariates\label{sec:condITEMakBnds}}
\subsection{Overview}

We now seek to estimate bounds on the unidentifiable probability an individual benefits from treatment ($\pibt$) in pre-treatment stratum $X_i=x$:
\[
\theta(\delta,x)={\rm pr}\left\{Y_i(1)-Y_i(0)>\delta\mid X_i=x\right\}.
\]
Could it be known, this quantity is helpful to understand whether the benefit of receiving treatment varies across pre-treatment covariate strata. Denote the target bounds as $\theta^{L}(\delta,x)$ and $\theta^{U}(\delta,x)$, the lower and upper bound, respectively. They satisfy:
\[
\theta^L(\delta,x)\leq \theta(\delta,x)\leq\theta^{U}(\delta,x).
\]
And denote the corresponding estimators as $\hat{\theta}^{L}(\delta,x)$ and $\hat{\theta}^{U}(\delta,x)$, respectively. We would like a guarantee about how close $\{\hat{\theta}^L(\delta,x),\hat{\theta}^U(\delta,x)\}$ is to $\curl{\theta^L(\delta,x),\theta^U(\delta,x)}$. 
\begin{remark}[Large enough sample at a covariate stratum?]

Importantly, should a large enough sample be collected at stratum $x$ of the pre-treatment covariates, Proposition \ref{prop:mainResultMargPIBT} can be applied for a frequentist confidence statement about $\theta(\delta,x)$ using the analogous interpretation in Remark \ref{rem:practicalInterpretation}. The rest of this section is useful for the case that a large enough sample is not collected for some or all of the pre-treatment covariate strata of interest. 
\end{remark}

Proposition \ref{prop:mainResConditionalGeneral} is the main non-asymptotic, model-free result for this setting. Proposition \ref{prop:mainResRegrResids} is the adaptation of Proposition \ref{prop:mainResConditionalGeneral} to a case where we strategically use regression residuals to estimate the bounds on $\pibt$. For this model-free approach with regression residuals, we show how a confidence statement about the conditional $\pibt$ bound estimators can be written in terms of a target confidence level that is adjusted according to how accurate the arbitrary regression function estimator is. In Corollary \ref{cor:effCondBounds}, we demonstrate how the statement written in this manner implies that the conditional bound estimators are as statistically efficient as the regression function estimator of choice. Moreover, Proposition \ref{prop:CIsGaussResids} adopts the more general Proposition \ref{prop:mainResRegrResids} to the canonical linear regression case. 

\subsection{Existing work: $\pibt$ conditional on pre-treatment covariates}

\citet{fanSooPark2010} also discuss the conditional bounds for the conditional individual treatment effect distribution at the population-level along with a brief discussion of possible estimation approaches. The appendix of \citet{frandsenPositiveCor2021} also mentions a generalisation of the mutual stochastic increasing-ness assumption in order to arrive at bounds for $1-\theta(\delta,x)$. In the context of ordinal outcomes, using Makarov's bounds to study the individual treatment effect, \citet{Lu2015OrdinalIndvidTreatmentEO} also consider the case we would like to condition on covariates. All three works suggest some form of distributional regression, the semi-parametric estimation of a conditional $\cdf$ \citep{KoenkerLeoratoDistrRegr2013,chernozhukovDiazDistrTrtmntEff2013,KNEIB2021DistrRegr}. We extend their discussion on covariate conditioning with a discussion on theoretical guarantees and how to conduct statistical inference with the bound estimators.

Related to our use of pre-treatment covariates, \citet{conformalITE2021} develops prediction intervals for the individual treatment effect based on quantile regression \citep{regressionQuantilesKoenkerBasset1978} with strategic calibration using conformal inference \citep{VovkGammermanShafer2005Conformal,conformalShaferVovk08a,conformal2019DistrShift}. Moreover, this work is extended to scenarios where unobserved confounding is possible \citep{YinShiWangBlei2021SensitivityITE,JinRenCandes2021SensitivityITE}. Our work here is complementary to these advances, in analogy to the inverse relation between quantiles and the $\cdf$ of a distribution. With respect to theoretical guarantees, Proposition \ref{prop:mainResRegrResids} below is with respect to the supremum deviation across inputted pre-treatment covariate levels, whereas the conformal guarantees of \citet{conformalITE2021} are with respect to any single randomly generated covariate level. This is a subtle but important difference: one may like inference about heterogeneity in individual treatments effects to extend simultaneously to multiple individuals with fixed (non-random) covariate levels, not necessarily a single random individual with a random pre-treatment covariate value. On the other hand, our work does not necessarily extend to a target population beyond that represented by our training sample, and one of the main results here (Proposition \ref{prop:mainResRegrResids}) makes use of a regularity condition on regression residuals that quantile regression generally avoids.  

Related to the use of quantiles to infer individual treatment effects, \citet{fanSooPark2010} and \citet{fanSooPark2012} discuss estimators for the sharp bounds on the quantiles of the individual treatment effect distribution. 




\subsection{The target bounds on conditional $\pibt$, their estimators, and the main result}

The bounds $\theta^L(\delta,x)$ and $\theta^U(\delta,x)$ make use of the conditional $\cdf$s ($w=0,1$):
\[
F_w(y\mid x):={\rm pr}\curl{Y_i(w)\leq y \mid X_i=x}=\prc{Y_i\leq y}{X_i=x,W_i=w},
\]
with the second equality due to Assumption \ref{assump:strongCondIgno} and consistency. Explicitly, due to Lemma \ref{lem:makBounds} in the Appendix \ref{append:makBoundsDiscussion}, we have:
\[
\theta^L(\delta,x) = -\min\left[\inf_y\left\{ F_1(y+\delta/2\mid x)-F_0(y-\delta/2\mid x) \right\},0\right]
\]
along with
\[
\theta^U(\delta,x) = 1-\max\left[\sup_y\left\{ F_1(y+\delta/2\mid x)-F_0(y-\delta/2\mid x) \right\},0\right]
\]
at the population-level. Denote \[
G(y,\delta,x):= F_1(y+\delta/2\mid x)-F_0(y-\delta/2\mid x),
\]
and its corresponding estimator as $\hat{G}_n(y,\delta,x)$ based on the training sample. In practice, one may specify $\hat{G}_n(y,\delta,x)$ as the difference of two conditional $\cdf$ estimators as in Corollary \ref{cor:mainResDistrRegr} below. The plug-in estimators for the lower bound and upper bounds, respectively, will be:
\begin{equation}\label{eqn:plugInCondBoundL}
\hat{\theta}^L(\delta,x) = -\min\left[\inf_y\left\{ \hat{G}(y,\delta,x) \right\},0\right]
\end{equation}
along with
\begin{equation}\label{eqn:plugInCondBoundU}
\hat{\theta}^U(\delta,x) = 1-\max\left[\sup_y\left\{ \hat{G}_{n}(y,\delta,x) \right\},0\right].
\end{equation}
With respect to this choice of $\{\hat{\theta}^L(\cdot,\cdot),\hat{\theta}^U(\cdot,\cdot)\}$, Proposition \ref{prop:mainResConditionalGeneral} is a result under the most general conditions. Corollary \ref{cor:mainResDistrRegr} and Proposition \ref{prop:mainResRegrResids} give further concreteness for how exactly to guarantee the premise of Proposition \ref{prop:mainResConditionalGeneral} with respect to $\hat{G}(y,\delta,x)$. The idea behind the generic statement in Proposition \ref{prop:mainResConditionalGeneral} is to encourage extensions, especially those with the possibility of being more statistically efficient, with less restricted conditions than those in Proposition \ref{prop:mainResRegrResids}, or with modeling assumptions that are tailored to the application at hand.

\begin{proposition}[A model-free inequality\label{prop:mainResConditionalGeneral}]

If Assumption \ref{assump:strongCondIgno} holds, we have for all $\delta\in\mathbb{R}$ and all $x\in\text{support}(X_i)$:
\begin{equation}\label{eqn:keyIneqMainResX}
\left| \hat{\theta}^{L}(\delta,x)-\theta^{L}(\delta,x)\right| \vee\left|\hat{\theta}^{U}(\delta,x)-\theta^{U}(\delta,x)\right| \leq \sup_y\abs{ \hat{G}_n(y,\delta,x)-G(y,\delta,x) }.
\end{equation}

\end{proposition}

The implications of the deterministic inequality in \eqref{eqn:keyIneqMainResX} are interesting and perhaps a bit surprising. This inequality is stating that in a finite sample, the conditional bound estimators $\{\hat{\theta}^L(\delta,x),\hat{\theta}^U(\delta,x)\}$ in \eqref{eqn:plugInCondBoundL} and \eqref{eqn:plugInCondBoundU} are jointly no less accurate at estimating $\curl{\theta^L(\delta,x),\theta^U(\delta,x)}$ as the choice of $\hat{G}_n(y,\delta,x)$ is for $G(y,\delta,x)$, whatever the choice may be.

We can also turn \eqref{eqn:keyIneqMainResX} into a statement of frequentist confidence: if, additionally, $\hat{G}_n(y,\delta,x)$ is such that there exists a value $t_{\alpha}(\delta,x)$ such that:
\[
{\rm pr}\left\{ \sup_y\abs{ \hat{G}_n(y,\delta,x)-G(y,\delta,x) }\leq t_{\alpha}(\delta,x) \right\}\geq 1-\alpha,
\]
then we have that:
\[
{\rm pr}\left\{ \left|\hat{\theta}^{L}(\delta,x)-\theta^{L}(\delta,x)\right|\vee\left|\hat{\theta}^{U}(\delta,x)-\theta^{U}(\delta,x)\right|\leq t_{\alpha}(\delta, x) \right\}\geq 1-\alpha. 
\]
The term $t_{\alpha}(\delta,x)$ need not vary with $x$ or $\delta$; it can also be with respect to the concentration of $\hat{G}_n(y,\delta,x)$ uniformly across $x$ or across $\delta$ if more appropriate. Proposition \ref{prop:mainResRegrResids} below is an example with such a uniform guarantee.

With regard to specifying $\hat{G}_n(y,\delta,x)$, one choice is to plug in estimators of $F_{w}(y\mid x)$ ($w=0,1$) to arrive at a concentration inequality for the conditional bound estimators as Corollary \ref{cor:mainResDistrRegr} suggests. In doing so, provided the appropriate guarantee exists for the conditional $\cdf$ estimators, we actually get a strong guarantee for the bound estimators $\hat{\theta}^L(\delta,x)$ and $\hat{\theta}^U(\delta,x)$ that is simultaneous across all threshold values $\delta$ used to define $\pibt$ in Definition \ref{defn:propBenefit}.


\begin{corollary}[Concentration when plugging in conditional $\cdf$ estimators\label{cor:mainResDistrRegr}]

Let $\hat{F}_{wn}(y|x)$ denote an estimator for the conditional $\cdf$, $F_{wn}(y|x)$ ($w=0,1)$. If Assumption \ref{assump:strongCondIgno} holds, and the estimators $\hat{F}_{wn}(y|x)$ ($w=0,1)$ are such that there exists a value $t_{w,\alpha}(x)$ satisfying:
\[
{\rm pr}\curl{ \sup_y\abs{ \hat{F}_{wn}(y\mid x)-F_w(y\mid x) }\leq t_{w,\alpha}(x) }\geq 1-\alpha/2,
\]
then we have that:
\[
{\rm pr}\sqbrack{ \sup_\delta\curl{\left|\hat{\theta}^{L}(\delta,x)-\theta^{L}(\delta,x)\right|\vee\left|\hat{\theta}^{U}(\delta,x)-\theta^{U}(\delta,x)\right|}\leq \sum_{w=0,1}t_{w,\alpha}(x) }\geq 1-\alpha. 
\]

\end{corollary}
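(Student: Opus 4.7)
The plan is to reduce Corollary \ref{cor:mainResDistrRegr} to the second part of Theorem \ref{thm:mainResConditionalGeneral} by constructing a suitable $\hat{G}_n(y,\delta,x)$ from the two given conditional CDF estimators and then verifying that the uniform concentration required as input to that theorem holds simultaneously across all thresholds $\delta$. Concretely, I would take the natural plug-in choice
\[
\hat{G}_n(y,\delta,x) \;:=\; \hat{F}_{1n}(y+\delta/2\mid x) - \hat{F}_{0n}(y-\delta/2\mid x),
\]
so that $\hat{\theta}^L(\delta,x)$ and $\hat{\theta}^U(\delta,x)$ are exactly the plug-in estimators defined in \eqref{eqn:plugInCondBoundL} and \eqref{eqn:plugInCondBoundU}.

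Next I would apply the triangle inequality pointwise in $y$ and $\delta$ to obtain
\[
\abs{\hat{G}_n(y,\delta,x)-G(y,\delta,x)} \leq \abs{\hat{F}_{1n}(y+\delta/2\mid x)-F_1(y+\delta/2\mid x)} + \abs{\hat{F}_{0n}(y-\delta/2\mid x)-F_0(y-\delta/2\mid x)}.
\]
Since the arguments $y+\delta/2$ and $y-\delta/2$ each range over all of $\mathbb{R}$ as $(y,\delta)$ vary, taking the supremum over both $y$ and $\delta$ on the left yields
\[
\sup_{y,\delta}\abs{\hat{G}_n(y,\delta,x)-G(y,\delta,x)} \leq \sum_{w=0,1}\sup_y\abs{\hat{F}_{wn}(y\mid x)-F_w(y\mid x)}.
\]
This step is the heart of the argument and is what allows a uniform guarantee across $\delta$ without paying any additional price beyond the two marginal conditional CDF deviations.

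I would then combine this inequality with the deterministic bound \eqref{eqn:keyIneqMainResX} from Theorem \ref{thm:mainResConditionalGeneral}, taking the supremum over $\delta$ on both sides, to get
\[
\sup_\delta\parenth{\abs{\hat{\theta}^{L}(\delta,x)-\theta^{L}(\delta,x)}\vee\abs{\hat{\theta}^{U}(\delta,x)-\theta^{U}(\delta,x)}} \leq \sum_{w=0,1}\sup_y\abs{\hat{F}_{wn}(y\mid x)-F_w(y\mid x)}.
\]
A union bound over the two events $\{\sup_y |\hat{F}_{wn}(y\mid x)-F_w(y\mid x)|\leq t_{w,\alpha}(x)\}$, each of which holds with probability at least $1-\alpha/2$ by hypothesis, shows that their intersection occurs with probability at least $1-\alpha$, and on that intersection the left-hand side above is at most $t_{0,\alpha}(x)+t_{1,\alpha}(x)$, which gives the claim.

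The main obstacle is really just bookkeeping rather than anything deep: one has to be careful that the suprema over $y$ and $\delta$ collapse cleanly into marginal Kolmogorov-type deviations for the conditional CDFs, and that the failure-probability budget is split symmetrically as $\alpha/2$ so that the union bound yields exactly $1-\alpha$. No assumption on the continuity of $F_w(\cdot\mid x)$, on the joint distribution of $(Y_i(0),Y_i(1))\mid X_i=x$, or on the specific form of the conditional CDF estimators is needed; the result is a purely structural consequence of Theorem \ref{thm:mainResConditionalGeneral} applied to the natural plug-in.
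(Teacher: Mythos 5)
Your proposal is correct and follows essentially the same route as the paper: the same plug-in specification of $\hat{G}_n$, the triangle-inequality step with the suprema over $(y,\delta)$ collapsing into the two marginal Kolmogorov deviations, and then the deterministic inequality \eqref{eqn:keyIneqMainResX} from Theorem \ref{thm:mainResConditionalGeneral}. Your explicit union bound over the two $\alpha/2$ events is just a slightly more careful statement of the probability bookkeeping the paper leaves implicit.
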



\subsection{More explicit conditional bounds with strategic use of regression residuals\label{sec:explicitBoundResids}}

The question now becomes how exactly to specify $\hat{F}_{wn}(y\mid x)$ in Corollary \ref{cor:mainResDistrRegr}, while guaranteeing the closeness between $\{\hat{\theta}^L(\delta,x),\hat{\theta}^U(\delta,x)\}$ and $\curl{\theta^L(\delta,x),\theta^U(\delta,x)}$ at some target confidence level. We explore one such choice using regression residuals for which such a high confidence guarantee is possible as summarized in Proposition \ref{prop:mainResRegrResids}. The motivation is that we would like something very similar to the plug-in estimator of ${\rm pr}\curl{Y_i(w)\leq y}$ given in \eqref{eqn:eCDFmarg} for the randomized experiment case. For the case that a locally linear approximation is desired, we give an actionable application of this result with regression residuals in \S~\ref{append:linearGuassianSetting}. 

Let $(X_1,W_1,Y_1(0),Y_1(1)),\dots,(X_n,W_n,Y_n(0),Y_n(1))$ be independent and identically distributed copies from a joint distribution. Denote the observable training data as:
\[
\Tcal:=\left\{ (X_i,W_i,Y_i) \right\}_{i=1}^n,
\]
where $Y_i=W_iY_i(1)+(1-W_i)Y_i(0)$. Consider partitioning $\Tcal$ into two independent splits $\Tcal_1$ and $\Tcal_2$. Denote the corresponding training indices as $\I_1,\I_2\subseteq\{1,\dots,n\}$ for $\Tcal_1$ and $\Tcal_2$, respectively. Denote $S_{w}:=\{i:\ W_i=w\}$, the index set of individuals in the sample in treatment group $w=0,1$. Let $n_w:=\abs{S_w\cap\I_2}$, the sample size in treatment group $w=0,1$ coming from data split $\Tcal_2$.

Further, denote
\[
\mu_w(x):=E\curl{Y_i(w)\mid X_i=x},
\]
the conditional expectation of the potential outcome as a function of the pre-treatment covariates. Denote the regression estimate using $\Tcal_1$ as $\hat{\mu}_w(x)$. Importantly, we will be able to reason about the counterfactual quantity $\mu_w(x)$ under Assumption \ref{assump:strongCondIgno}, because:
\[
\mu_w(x)= E\curl{Y_i\mid W_i=w,X_i=x},
\]
the conditional expectation of the observed outcome in treatment group $w=0,1$. Now consider the residuals:
\[
R_i(w):=Y_i(w)-\mu_w(X_i).
\]
Denote the approximation of $R_i(w)$ using $\hat{\mu}_{w}(\cdot)$ as
\[
\hat{R}_i(w):=Y_i(w)-\hat{\mu}_{w}(X_i).
\]

Motivated by the use of independent and identically distributed draws used to define $\hat{F}_{wn}(y)$ in \eqref{eqn:eCDFmarg} for the marginal $\pibt$ in a randomized experiment, we would like to approximate draws from the distribution 
\[
Y_i\mid X_i=x,W_i=w.
\]
With this in mind, we will specify:
\begin{equation}\label{eqn:condCDFResids}
\hat{F}_{wn}(y\mid x):=\frac{1}{n_w}\sum_{ i\in S_w\cap\I_2 }\indic{ \hat{\mu}_w(x)+\hat{R}_i(w) \leq y }.
\end{equation}
Considering that the definition of $R_i(w)$ means that $Y_i(w)=\mu_w(x)+R_i(w)$ conditional on $X_i=x$, it seems that using $\hat{\mu}_{w}(x)+\hat{R}_i(w)$ will make this choice of $\hat{F}_{wn}(y\mid x)$ a reasonable approximation to $F_{w}(y\mid x)$. 

Noting the liberal use of the plug-in principle en route to the choice of $\{\hat{\theta}^L(\cdot,\cdot),\hat{\theta}^U(\cdot,\cdot)\}$ using the conditional $\cdf$ estimators in \eqref{eqn:condCDFResids}, a concern is now what the regularity conditions must be so that overall the conditional bound estimators are close to their true values. For any given value of $x$, $\hat{F}_{wn}(y\mid x)$ is reusing residuals for indices in split $\I_2$ corresponding to subjects that are not necessarily in stratum $x$. Implicit in this use is that the distribution of $\{R_i(0),R_i(1)\}$ is the same across values of $x$. That is, we are using the independence assumption:
\[
\curl{R_i(0),R_i(1)}\indep X_i\ (i=1,\dots,n).
\]


Beyond this regularity condition, we also require that the distribution of $\hat{R}_i(w)$ approximates well the distribution of $R_i(w)$, which in turn requires that $\hat{\mu}_w(\cdot)$ be close to $\mu_w(\cdot)$. This explains the correction to the confidence level in Proposition \ref{prop:mainResRegrResids} with respect to how likely a deviation, in a uniform sense, is to occur between the true regression curve and the estimated regression curve based on random training data.

In Proposition \ref{prop:mainResRegrResids}, $\Xmat\in R^{|\I_1|\times p}$ is such that its rows are comprised of $X_i^T\in R^{1\times p}$ for $i\in\I_1$. 

\begin{proposition}[Justifying approach with regression residuals\label{prop:mainResRegrResids}]

For the bound estimators in \eqref{eqn:plugInCondBoundL} and \eqref{eqn:plugInCondBoundU}, let us specify $\hat{G}(y\mid x)=\hat{F}_{1n}(y\mid x)-\hat{F}_{0n}(y\mid x)$ using the conditional $\cdf$ estimators of \eqref{eqn:condCDFResids}. Let Assumption \ref{assump:strongCondIgno} hold, and assume further that the arbitrary joint distribution of $(R_i(0),R_i(1),X_i)$ is such that 
\[
(R_i(0),R_i(1))\indep X_i\ (i=1,\dots,n).
\]
Conditional on $\Xmat$, we have for any appropriate $0\leq\alpha\leq 1$:

\[\begin{aligned}
&\sup_{\delta,x}\curl{\left|\hat{\theta}^{L}(\delta,x)-\theta^{L}(\delta,x)\right|\vee\left|\hat{\theta}^{U}(\delta,x)-\theta^{U}(\delta,x)\right|}\\
\leq\ &\sum_{w=0,1}\sup_r\sqbrack{{\rm pr}\curl{r< R_i(w)\leq r+2t_w}\vee{\rm pr}\curl{r-2t_w<R_i(w)\leq r}}\\
&+\curl{\frac{\log(4/\alpha)}{2} }^{1/2}\parenth{n_0^{-\frac{1}{2}} + n_1^{-\frac{1}{2}} }
\end{aligned}\]
with probability at least 
\[
1-\alpha-\sum_{w=0,1}{\rm pr}\curl{ \sup_x\left|\hat{\mu}_{w}(x)-\mu_{w}(x)\right|>t_w \mid \Xmat}.
\]
Here, $t_0,t_1\geq0$ may depend on $\Xmat$. If they do not, we may remove the conditional statements. Appropriate values of $\alpha$ are those such that $1-\alpha-\sum_{w=0,1}{\rm pr}\curl{ \sup_x\left|\hat{\mu}_{w}(x)-\mu_{w}(x)\right|>t }$ is between $0$ and $1$. 
\end{proposition}

The proof of Proposition \ref{prop:mainResRegrResids} involves a strategic application of the Dvortetzky-Kiefer-Wolfowitz inequality \citep{DKWIneq1956,massartDKWIneq1990,NaamanDKWIneq2021} that is tailored to the imputed draws from the conditional potential outcome distribution. Consider the following corollary to Proposition \ref{prop:mainResRegrResids}. 

\begin{corollary}[Efficiency of Conditional Bound Estimators in Proposition \ref{prop:mainResRegrResids}\label{cor:effCondBounds}]

Let $\mathcal{F}$ be the function class containing our regression estimator, $\hat{\mu}_{w}(\cdot)$. Assume there exists a sequence $g_{n,\mathcal{F}}$, depending on $n$ and the complexity of $\mathcal{F}$, like feature dimension or regularisation parameters, such that
\[
\max_{w=0,1}\sup_x\left|\hat{\mu}_{w}(x)-\mu_{w}(x)\right|\lesssim g_{n,\mathcal{F}}
\]
with probability at least $1-\alpha$. Then:
\[
\sup_{\delta,x}\curl{\left|\hat{\theta}^{L}(\delta,x)-\theta^{L}(\delta,x)\right|\vee\left|\hat{\theta}^{U}(\delta,x)-\theta^{U}(\delta,x)\right|}\lesssim g_{n,\mathcal{F}}+\curl{\frac{\log(4/\alpha)}{2} }^{1/2}\parenth{n_0^{-\frac{1}{2}} + n_1^{-\frac{1}{2}} }
\]
holds with probability at least $1-2\alpha$. 
\end{corollary}


The rate of convergence of the regression estimator, $g_{n,\mathcal{F}}$, decreases down to $0$ as $n$ increases under proper specification. Should this rate satisfy
\[
\curl{\frac{\log(4/\alpha)}{2} }^{1/2}\parenth{n_0^{-\frac{1}{2}} + n_1^{-\frac{1}{2}} }\lesssim g_{n,\mathcal{F}},
\]
Corollary \ref{cor:effCondBounds} means that we lose nothing with respect to asymptotic efficiency with the plug-in estimators in \eqref{eqn:plugInCondBoundL} and \eqref{eqn:plugInCondBoundU} that build on an estimator of $\{\mu_0(\cdot),\mu_1(\cdot)\}$. Provided the framework gives $\hat{\mu}_w(x)$ ($w=0,1$), one can apply Proposition \ref{prop:mainResRegrResids} with any meta-learning algorithms that are used to estimate the conditional average treatment effect function ($\cate$) \citep{Kunzel4156,nieWager,grfAnnals,causalForests,kennedy2022OptimalDoublyRobustCATE,BurkhartRuiz2022}.

\subsubsection{An application to the linear Gaussian regression setup\label{append:linearGuassianSetting}}


We now explicitly apply Proposition \ref{prop:mainResRegrResids} to a case where we can derive non-asymptotic uniform guarantees on the regression function. It can be of interest when a locally linear approximation is desired, as made explicit by the following additional assumption. 

\begin{assumption}[Restricted data generating mechanism\label{assump:restrictedModel}]
We assume that 
\[
Y_i(w) = \mu_w(X_i)+R_i(w)\ (i=1,\dots,n).
\]
Here, $X_i\in\mathbb{R}^p$, $\mu_w(X_i)=\beta_w^T\Psi_w(X_i)$, and $\Psi_w:\mathbb{R}^p\to\mathbb{R}^d$ is a fixed mapping such that $\norm{\Psi_w(x)}_2\leq1$ and $d < n$. Moreover, $(R_i(0),R_i(1))$ ($i=1,\dots,n$) follow a joint distribution satisfying $(R_i(0),R_i(1))\indep X_i$ and such that marginally, \[
R_i(w)\sim\mathcal{N}(0,\sigma^2_w);\ \sigma^2_w>0\ (w=0,1).
\]
\end{assumption}

Denote the training data as:
\[
\Tcal:=\left\{ (X_i,W_i,Y_i) \right\}_{i=1}^n.
\]
Consider partitioning $\Tcal$ into two splits indexed by the disjoint sets $\I_1,\I_2\subseteq\{1,\dots,n\}$. Denote $S_{w}:=\{i:\ W_i=w\}$, the index set of individuals in the sample in treatment group $w=0,1$. Let $n_w:=\abs{S_w\cap\I_2}$, the sample size in treatment group $w=0,1$ coming from data split $\Tcal_2$. Now, let $\Psib_w\in\mathbb{R}^{ |\I_1\cap S_w|\times d}$ be such that its rows are made up by stacking $\Psi_w(X_i)^T$ for each $i\in\I_1\cap S_w$. Further, let $\Yvec_w\in\mathbb{R}^{|\I_1\cap S_w|\times 1}$ contain entries for the corresponding observed outcome $Y_i$ for each $i\in\I_1\cap S_w$. 

In Proposition \ref{prop:CIsGaussResids}, $\Phi$ denotes the $\cdf$ of the standard normal distribution. For a matrix $A\in\mathbb{R}^{m\times q}$, denote its operator norm as:
\[
\matnorm{A}_{op}:=\sup_{v\in\mathbb{R}^q:\ \norm{v}_2=1 }\norm{ Av }_2.
\]

\begin{proposition}\label{prop:CIsGaussResids}

Assume $\text{rank}(\Psib_w)=d$ ($w=0,1$) almost surely. Let $v_{d,\alpha}$ denote the $(1-\alpha/2)$th quantile for the $\chi^2_d$ distribution. Under Assumption \ref{assump:restrictedModel}, we have with confidence at least $(1-2\alpha)\times 100\%$ that uniformly across all pre-treatment covariate strata $x$,
\[
\theta(\delta,x)={\rm pr}\curl{Y_i(1)-Y_i(0)>\delta \mid X_i=x}
\]
is contained in the interval with starting point
\[\begin{aligned}
\hat{\theta}^{L}(\delta,x)&-\sum_{w=0,1}\Phi\curl{ \parenth{v_{d,\alpha}}^{1/2} \matnorm{ (\Psib_w^T\Psib_w)^{-1/2} }_{op}}\\
&+\sum_{w=0,1}\Phi\curl{ -\parenth{v_{d,\alpha}}^{1/2} \matnorm{ (\Psib_w^T\Psib_w)^{-1/2} }_{op} } \\
&-\curl{\frac{\log(4/\alpha)}{2} }^{1/2}\parenth{n_0^{-\frac{1}{2}} + n_1^{-\frac{1}{2}} }
\end{aligned}\]
and end point
\[\begin{aligned}
\hat{\theta}^{U}(\delta,x)&+\sum_{w=0,1}\Phi\left( \parenth{v_{d,\alpha}}^{1/2} \matnorm{ (\Psib_w^T\Psib_w)^{-1/2} }_{op} \right)\\
&-\sum_{w=0,1}\Phi\left( -\parenth{v_{d,\alpha}}^{1/2} \matnorm{ (\Psib_w^T\Psib_w)^{-1/2} }_{op} \right) \\
&+\curl{\frac{\log(4/\alpha)}{2} }^{1/2}\parenth{n_0^{-\frac{1}{2}} + n_1^{-\frac{1}{2}} }.
\end{aligned}\]

\end{proposition}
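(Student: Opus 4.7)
\textbf{Proof proposal for Proposition \ref{prop:CIsGaussResids}.}

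The plan is to plug into Theorem \ref{thm:mainResRegrResids} with a carefully chosen radius $t_w$ that simultaneously controls the deviation of the regression estimator and produces the Gaussian-CDF gap appearing in the statement. Concretely, I would pick, for each $w=0,1$,
\[
t_w \;\dot{=}\; \sigma_w\sqrt{v_{d,\alpha}}\,\matnorm{(\Psib_w^T\Psib_w)^{-1/2}}_{op},
\]
so that the probability deficit in Theorem \ref{thm:mainResRegrResids} becomes at most $\alpha/2+\alpha/2=\alpha$, giving overall probability at least $1-2\alpha$ as claimed.

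The first step is to verify that $t_w$ controls $\sup_x|\hat\mu_w(x)-\mu_w(x)|$. Conditional on $\Xmat$, under Assumption \ref{assump:restrictedModel} and $\text{rank}(\Psib_w)=d$, ordinary least squares gives $\hat\beta_w-\beta_w = (\Psib_w^T\Psib_w)^{-1}\Psib_w^T \mathbf{R}_w$ where $\mathbf{R}_w$ is the vector of training residuals, so $\hat\beta_w-\beta_w\mid\Xmat\sim\mathcal{N}\bigl(0,\sigma_w^2(\Psib_w^T\Psib_w)^{-1}\bigr)$. Writing
\[
\hat\mu_w(x)-\mu_w(x)=\Psi_w(x)^T(\hat\beta_w-\beta_w)=\bigl[(\Psib_w^T\Psib_w)^{-1/2}\Psi_w(x)\bigr]^T \bigl[(\Psib_w^T\Psib_w)^{1/2}(\hat\beta_w-\beta_w)\bigr],
\]
Cauchy--Schwarz and the bound $\norm{\Psi_w(x)}_2\le 1$ give $|\hat\mu_w(x)-\mu_w(x)|\le \matnorm{(\Psib_w^T\Psib_w)^{-1/2}}_{op}\cdot\norm{U_w}_2$, where $U_w:=(\Psib_w^T\Psib_w)^{1/2}(\hat\beta_w-\beta_w)\mid\Xmat\sim\mathcal{N}(0,\sigma_w^2 I_d)$. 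Since $\norm{U_w}_2^2/\sigma_w^2\sim\chi^2_d$, the definition of $v_{d,\alpha}$ yields $\Pr\bigl(\sup_x|\hat\mu_w(x)-\mu_w(x)|>t_w\mid\Xmat\bigr)\le\alpha/2$, as required.

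The second step is to evaluate the residual-distribution term in Theorem \ref{thm:mainResRegrResids} using $R_i(w)\sim\mathcal{N}(0,\sigma_w^2)$ and the independence $R_i(w)\indep\Xmat$. Since the Gaussian density is symmetric and unimodal, for any $t_w\ge0$ the function $r\mapsto\Phi((r+2t_w)/\sigma_w)-\Phi(r/\sigma_w)$ is maximized at $r=-t_w$, giving supremum $\Phi(t_w/\sigma_w)-\Phi(-t_w/\sigma_w)$; the companion expression $\Pr(r-2t_w<R_i(w)\le r)$ yields the same value by symmetry. Substituting the chosen $t_w$, the factor $\sigma_w$ cancels, leaving exactly
\[
\Phi\!\bigl(\sqrt{v_{d,\alpha}}\,\matnorm{(\Psib_w^T\Psib_w)^{-1/2}}_{op}\bigr)-\Phi\!\bigl(-\sqrt{v_{d,\alpha}}\,\matnorm{(\Psib_w^T\Psib_w)^{-1/2}}_{op}\bigr),
\]
which is the summand appearing in the proposition. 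Summing over $w\in\{0,1\}$ and adding the DKW term $(\log(4/\alpha)/2)^{1/2}(n_0^{-1/2}+n_1^{-1/2})$ directly gives the stated upper and lower endpoints.

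The main obstacle I anticipate is the uniform-in-$x$ control of $\hat\mu_w(x)-\mu_w(x)$: the supremum is over all $x$, but after the Cauchy--Schwarz decoupling the dependence on $x$ is entirely absorbed into the fixed factor $\norm{\Psi_w(x)}_2\le1$, so what looks like a high-dimensional uniform bound collapses to a single chi-squared tail inequality. The convenient cancellation of $\sigma_w$ then makes the final bound estimable from observable quantities only, which is the key feature exploited in the power analysis of Figure \ref{fig:powerAnalysisLmCase}.
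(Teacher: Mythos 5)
Your proposal is correct and follows essentially the same route as the paper's proof: the same choice $t_w=\sigma_w\sqrt{v_{d,\alpha}}\matnorm{(\Psib_w^T\Psib_w)^{-1/2}}_{op}$, the same Cauchy--Schwarz decoupling using $\norm{\Psi_w(x)}_2\le 1$ to reduce the uniform-in-$x$ bound to a $\chi^2_d$ tail, the same centered-interval maximization of the Gaussian residual probability yielding the $\Phi$-difference with $\sigma_w$ cancelling, all fed into Theorem \ref{thm:mainResRegrResids}. No gaps.
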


Although we do not prove that the term added and subtracted to the bound estimators $\{\hat{\theta}^L(\delta,x),\hat{\theta}^U(\delta,x)\}$ in Proposition \ref{prop:CIsGaussResids} goes to zero for a general feature matrix $(\Lambda_0,\Lambda_1)$ as sample size increase, the illustration of Proposition \ref{prop:CIsGaussResids} in Figure \ref{fig:powerAnalysisLmCase} of the appendix gives empirical evidence of this desirable decay. The proof of the inequality in Proposition \ref{prop:CIsGaussResids} is contained in Appendix \ref{append:proofCIsGaussResids}. It is an application of Proposition \ref{prop:mainResRegrResids} in the appendix, which specifies the bound estimators in terms of arbitrary mean-regression estimators $\{\hat{\mu}_0(x),\hat{\mu}_1(x)\}$. 

We believe Proposition \ref{prop:CIsGaussResids} may be regarded as partly representative of what can occur in practice while attempting to infer bounds on ${\rm pr}\curl{Y_i(1)-Y_i(0)>\delta \mid X_i=x}$ uniformly across $x$. The primary concerns for satisfactory margins of error are the typical concerns of regression: parsimony, multicollinearity, and feature dimension, all of which affect the convergence to zero of the terms involving $\Lambda_w$ $(w=0,1)$ in Proposition \ref{fig:powerAnalysisLmCase}.

\subsubsection{An example power analysis using Proposition \ref{prop:mainResRegrResids} and Proposition \ref{prop:CIsGaussResids}\label{append:linGaussCase}}

\begin{figure}[ht]
    \centering
    \includegraphics[width=\textwidth]{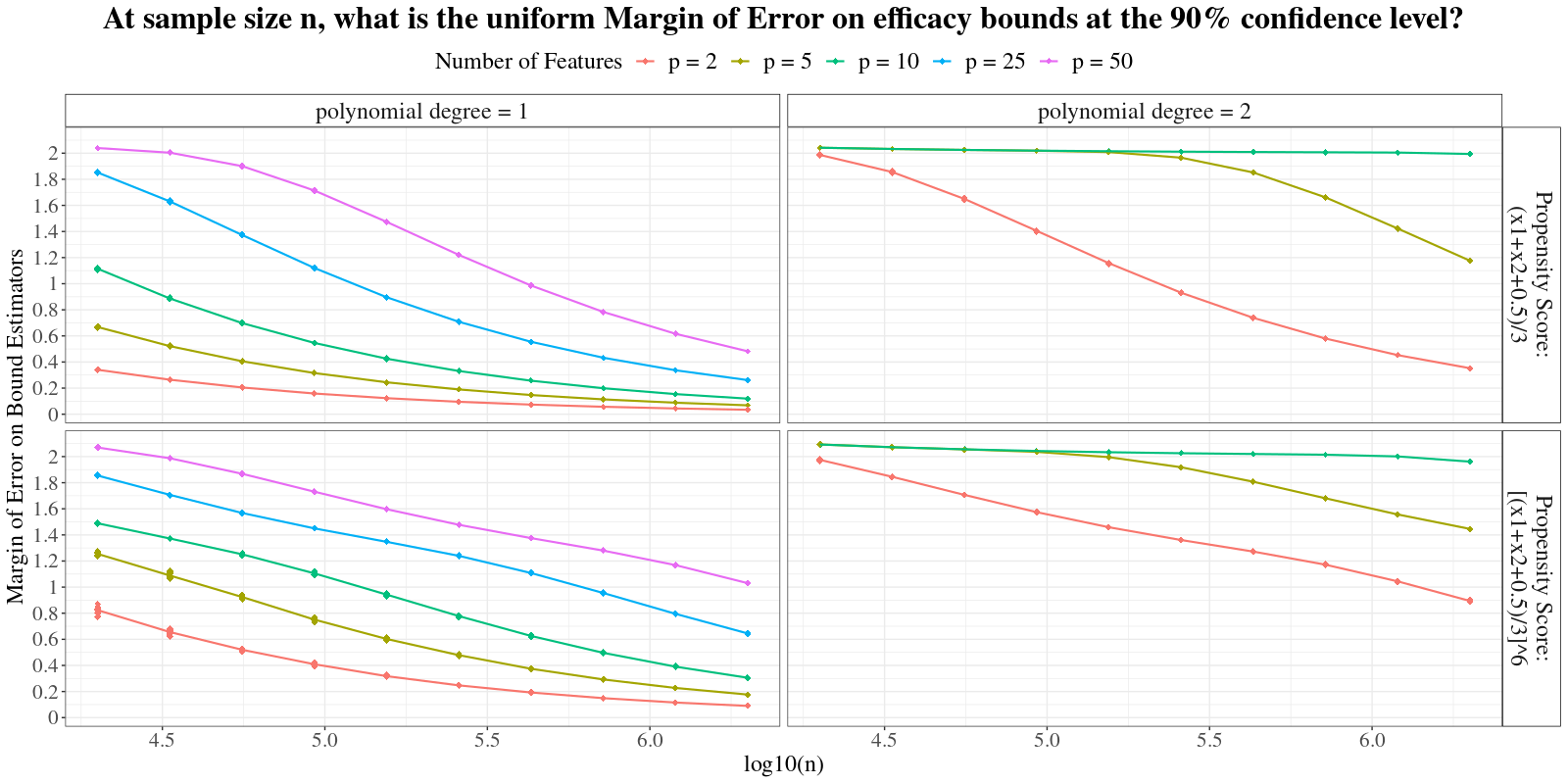}
    \caption{Power analysis based on Proposition \ref{prop:CIsGaussResids}. Each curve is the median calculated margin of error across 30 Monte Carlo simulations at the points that are also plotted. For these Monte Carlo draws, we generate $X_{ij}\sim\text{Uniform}(0,1)$ ($i=1,\dots,n$; $j=1,\dots,p$).}
    \label{fig:powerAnalysisLmCase}
\end{figure}

Under Assumption \ref{assump:restrictedModel}, Fig. \ref{fig:powerAnalysisLmCase} illustrates the behavior of the margin of error for 
\[
\sup_{\delta,x}\curl{\left|\hat{\theta}^{L}(\delta,x)-\theta^{L}(\delta,x)\right|\vee\left|\hat{\theta}^{U}(\delta,x)-\theta^{U}(\delta,x)\right|} 
\]
at the 90\% confidence level using the result in Proposition \ref{prop:CIsGaussResids}. As one can imagine, the distribution of $X_i$, the transformation $\Psi_w$ in Assumption \ref{assump:restrictedModel}, and the propensity score matter for an application of Proposition \ref{prop:CIsGaussResids}. That is, there may very well exist cases where the operator norm of $(\Psib_w^T\Psib_w)^{-1/2}$ does not decrease with $n$, along with cases of the propensity score where this norm decreases slowly due to insufficient treated or control units in the sample. To study this, the example summarized in Fig. \ref{fig:powerAnalysisLmCase} generates data as follows. 

We sample from a population in which each $X_{ij}\sim\text{Uniform}(0,1)$ across $i=1,\dots,n$ and $j=1,\dots,p$. The degree $q=1,2$ polynomial transformation of $X_i$ into $\Psi_w(X_i)$ includes all possible interaction terms of degree $1\leq k\leq q$, and each entry is re-scaled by $1/d^{1/2}$ so that $\norm{\Psi_w(X_i)}_2\leq1$ as in Assumption \ref{assump:restrictedModel}. For example, $\Psi_w(X_i)=(X_{i1},X_{i2},X_{i1}X_{i2},X_{i1}^2,X_{i2}^2)/{5^{1/2}}$ when $p=q=2$. The propensity score is $\prc{W_i=1}{X_i}=\curl{(X_{i1}+X_{i2}+0.5)/3}^m$\ $(m=1,6)$. The case with $m=1$ makes it so that the number of treated and control units are very close to each other in a random sample, while the case with $m=6$ will make it so that control units are typically much more represented. Moreover, the sample splitting is such that half of the observations are used to estimate $\{\mu_0(\cdot),\mu_1(\cdot)\}$, while the other half of the observations' residuals are used to estimate $\{{\theta}^L(\cdot,\cdot),\theta^U(\cdot,\cdot)\}$. 

\section{More on the scope of our results\label{sec:whenWorks}}
\subsection{Important concerns}
The contribution of this paper is in the quantification of the margin of error for the bounds on $\pibt$ given by the Makarov bounds \citep{makarovBounds,Frank1987BestpossibleBF,WILLIAMSON199089,fanSooPark2010}. Three important questions may come up regarding the scope of our results. For binary potential outcomes, is the use of the Makarov bounds on par with the use of the sharp Fr\'echet-Boole bounds \citep{boole1854,FrechetMaurice1935Gdtd,FrechetM.1960Sltd,hailperinBoole1986,muellerLiPearl2021PNSCovariates}? Moreover, can we have an alternative definition of benefiting from treatment in terms of the ratio of potential outcomes when the potential outcomes are strictly positive? Additionally, can we bound the proportion who are harmed by an intervention? The answer is yes to all three questions as we discuss in the next three subsections, respectively.




\subsection{When the potential outcomes are binary, the Makarov bounds on $\pibt$ are the same as the Boole-Fr\'echet bounds\label{sec:binaryOutcomes}}

Recall that the probability of necessity and sufficiency ($\pns$) for binary potential outcomes \citep{Pearl1999-sh,Tian2000-ul,pearlCausality} is the joint probability,
\[
{\rm pr}\curl{Y_i(1)=1,Y_i(0)=0}.
\]

For two measurable events $A$ and $B$, the Boole-Fr\'echet bounds on their joint probability are:
\[
\max\curl{\pr{A}+\pr{B}-1,0}\ \leq\ \pr{A\cap B}\ \leq\ \min\curl{\pr{A},\pr{B}}.
\]
These bounds are sharp \citep{Rschendorf1981SharpnessOfFrechetBoole}, meaning the bounds are attainable. For example, if $A\cap B=\emptyset$, then $\pr{A\cap B}=\max\curl{\pr{A}+\pr{B}-1,0}$ holds. Moreover, if $A\subseteq B$, then $\pr{A\cap B}=\min\curl{\pr{A},\pr{B}}$ is the case.

We claim in Proposition \ref{prop:boolFrechVsMakarov} that the Makarov bounds on $\pns$ are the same as the Boole-Fr\'echet bounds. From this proposition, it follows that all the estimation results we have shown are as applicable to the binary potential outcome case as any estimation approach involving the Boole-Fr\'echet bounds. 

\begin{proposition}[Boole-Fr\'echet bounds vs. Makarov bounds\label{prop:boolFrechVsMakarov}]

When the potential outcomes are binary, we have that the tightest Makarov bounds on $\pns$ are:
\[\sup_{0\leq\delta<1}\theta^L(\delta)\ \leq\ {\rm pr}\curl{Y_i(1)=1,Y_i(0)=0}\ \leq\ \inf_{0\leq\delta<1 }\theta^U(\delta)\]
in the marginal case, along with
\[\sup_{0\leq\delta<1}\theta^L(\delta,x)\ \leq\ {\rm pr}\curl{Y_i(1)=1,Y_i(0)=0 \mid X_i=x}\ \leq\ \inf_{0\leq\delta<1 }\theta^U(\delta,x)\]
in the conditional case. Moreover, these tightest Makarov bounds are the same as the Boole-Fr\'echet bounds:
\[\begin{aligned}
&\sup_{0\leq\delta<1}\theta^L(\delta)&=\ &\max\sqbrack{ {\rm pr}\curl{Y_i(1)=1}+{\rm pr}\curl{Y_i(0)=0}-1,0  }\\
&\inf_{0\leq\delta<1 }\theta^U(\delta)&=\ &\min\sqbrack{{\rm pr}\curl{Y_i(1)=1} ,{\rm pr}\curl{Y_i(0)=0}  }
\end{aligned}\]
and
\[\begin{aligned}
&\sup_{0\leq\delta<1}\theta^L(\delta,x)&=\ &\max\sqbrack{ {\rm pr}\curl{Y_i(1)=1 \mid X_i=x}+{\rm pr}\curl{Y_i(0)=0\mid X_i=x}-1,0  }\\
&\inf_{0\leq\delta<1 }\theta^U(\delta,x)&=\ &\min\sqbrack{{\rm pr}\curl{Y_i(1)=1 \mid X_i=x} ,{\rm pr}\curl{Y_i(0)=0\mid X_i=x}  }.
\end{aligned}\]

\end{proposition}

The proof of Proposition \ref{prop:boolFrechVsMakarov} can be found in Appendix \ref{append:proofPropBoolVsMak}. The general idea is to use that the $\cdf$s of binary potential outcomes have only three values, $0$, ${\rm pr}\curl{Y_i(w)=0}$, and $1$, across evaluation points $y\in\mathbb{R}$. We also use the fact that $\pibt$ is the same as $\pns$ when the threshold $\delta$ is in $[0,1)$.

\subsection{$\pibt$ in terms of the ratio of potential outcomes\label{sec:diffDefnRatio}}

We also have the following simple extension to ratios of potential outcomes. It is motivated by {Theorem 2} of \citet{WILLIAMSON199089}, which derives bounds for the $\cdf$ of a sum, difference, product, or ratio of two random variables with an unknown joint distribution. 

Suppose that we are interested in strictly positive potential outcomes $\tilde{Y}_i(0)$ and $\tilde{Y}_i(1)$. For example, this can be in a setting where the time to an event is the outcome of interest \citep{coxRegressionHazardRatio_1972,Stitelman2010_TTE_TMLE,Austin2014_TTE,Schober2018_TTE,Cai2020_TTE_TMLE}.
 Using a threshold $\tilde{\delta}>0$ (e.g. $\tilde{\delta}=1$), one may alternatively consider an individual to have benefited from treatment should the inequality
\[
\tilde{Y}_i(1)/\tilde{Y}_i(0)>\tilde{\delta}
\]
occur. That is, individual $i$ is deemed to have benefited from treatment should their treatment outcome be larger than their control outcome by a factor larger than $\tilde{\delta}$. Correspondingly, one may be interested in bounding the unidentifiable probability
\[
\tilde{\theta}(\tilde{\delta}):={\rm pr}\curl{\tilde{Y}_i(1)/\tilde{Y}_i(0)>\tilde{\delta}}. 
\]
To do so, consider the variable transformation
\[
Y_i(w)\ \dot{=}\ \log\curl{\tilde{Y}_i(w)}\ (w=0,1).
\]
Given our definition of $\theta(\delta)$ in Definition \ref{defn:propBenefit} and the one-to-one nature of the log transformation, we get that \[
\tilde{\theta}(\tilde{\delta})=\theta(\delta)
\]
when we set $\delta\ \dot{=}\ \log(\tilde{\delta})$. It follows that:
\[
\theta^L(\delta)\leq\tilde{\theta}(\tilde{\delta})\leq\theta^U(\delta). 
\]
We will simply have to work with the $\ecdf$ of $\log\{\tilde{Y}_i(w)\}$ when obtaining the estimators $\hat{\theta}^L(\delta)$ and $\hat{\theta}^U(\delta)$. Moreover, Proposition \ref{prop:mainResultMargPIBT} here is still useful to conduct inference on $\tilde{\theta}(\tilde{\delta})$ under the independent and identically distributed assumption. Likewise, the results in Section \ref{sec:condITEMakBnds} are also applicable when we work with the log-transformation.

\subsection{Reasoning about the proportion harmed by an intervention\label{sec:PIHI}}
Two days following the posting of our first manuscript on \texttt{arXiv}, a similar manuscript by \citet{kallus2022_propHarmed} was also posted on \texttt{arXiv}. This work studies the probability an individual is harmed by an intervention ($\pihi$) in the case of binary potential outcomes. Supposing instead that $Y_i=1$ is bad for an individual, while $Y_i=0$ is good for an individual, we have that the $\pihi$ is $\pns$:
\[
{\rm pr}\curl{Y_i(1)=1,Y_i(0)=0}.
\]
This is simply, but importantly, due to notational and vocabular semantics. Moreover, if the potential outcomes are real-valued, then one can refer to the quantity
\[
{\rm pr}\curl{Y_i(1)-Y_i(0)>\delta}
\]
as $\pihi$ instead of calling this quantity $\pibt$ as we do in Definition \ref{defn:propBenefit}. This holds analogously when we condition on $X_i=x$.

Given the discussion surrounding Proposition \ref{prop:boolFrechVsMakarov}, our model-free results extend to binary potential outcome cases for $\pihi$ as well. Given that the Boole-Fr\'echet inequalities underlay the theoretical estimation results for binary potential outcomes in \citet{kallus2022_propHarmed}, we believe that the contribution of their work compared to our present work is as follows. Most importantly, their results are for binary potential outcomes, while our results hold for binary, real-valued, or strictly positive potential outcomes. Next, their results include a doubly robust \citep{robins1994estimation,doublyRobust2011} estimation method which uses the estimated propensity score to adjust for covariates, possibly in an observational setting, when bounding marginal $\pihi$. This warrants further investigation for the case of non-binary potential outcomes. Moreover, their presentation of estimation theorems is in terms of big-O probability notation, i.e. statistical rates of convergence. We present results in terms of non-asymptotic concentration statements. The subtle difference is that our presentation helps provide nominal coverage guarantees for the confidence bands on $\pibt$ (or $\pihi$). Further, confidence bands for $\pihi$ presented in their work make use of a standard error: ${n}^{-\frac{1}{2}}$ times the standard deviation of a random variable being averaged to obtain the bound estimators. These standard-error-based confidence intervals, which are due to a central limit theorem, do not achieve nominal coverage until a fairly large sample size as demonstrated in their numerical results. See their {Algorithm 1} and {Fig. 3} for details. Finally, the non-asymptotic presentation of our results, having kept track of all constants, can also help with a statistical power analysis, as discussed in Sections \ref{sec:powerAnalysisMargPIBTCase} and \ref{append:linGaussCase}. However, keeping track of constants can prove difficult for some applications of Propositions \ref{prop:mainResConditionalGeneral} and \ref{prop:mainResRegrResids}. To extend the utility of Propositions \ref{prop:mainResConditionalGeneral} and \ref{prop:mainResRegrResids}, it may be interesting to study whether plausible bounds on these constants can be specified in such regression settings.

 \section{Proofs of the main theoretical results\label{append:proofsMainResults}}
\subsection{Overview}
The existing mathematical results we exploit include the following. Key to establishing the population-level target bounds on $\pibt$, we use the Makarov bounds first introduced in \citet{makarovBounds} and generalized in \citet{Frank1987BestpossibleBF} and \citet{WILLIAMSON199089}. These works establish a distribution-free bound on the cumulative distribution function ($\cdf$) on the sum (or difference or product) of two or more random variables having any unknown joint distribution and fixed marginal distributions. For the non-asymptotic concentration results (the margin of error derivations), the novel contribution of this paper, we use the Dvoretzky–Kiefer–Wolfowitz inequality \citep{DKWIneq1956,massartDKWIneq1990,NaamanDKWIneq2021}. We will refer to it as the DKW inequality for readability. The DKW inequality gives a non-parametric, non-asymptotic deviation inequality for the supremum difference at any evaluation point between a target $\cdf$ and its empirical analogue estimated with a sample of independent and identically distributed random variables. Importantly, \citet{massartDKWIneq1990} and \citet{NaamanDKWIneq2021} show that this inequality is tight under no additional assumptions. 
\subsection{Some key lemmas}

\begin{lemma}[Plug-in estimation of \citet{makarovBounds}'s conditional bounds\label{lem:condMakBoundEst}]

Consider jointly distributed random variables $(U_0,U_1,V)$. Denote:
\[
\gamma^L(\delta,v):= - \min\left[\inf_u\left\{\prc{U_1\leq u+\delta/2}{V=v}-\prc{U_0\leq u-\delta/2}{V=v}\right\},0\right]
\]
and
\[
\gamma^U(\delta,v):= 1- \max\left[\sup_u\left\{\prc{U_1\leq u+\delta/2}{V=v}-\prc{U_0\leq u-\delta/2}{V=v}\right\},0\right],
\]
the \citet{makarovBounds} and \citet{WILLIAMSON199089} lower and upper bounds for $\prc{U_1-U_0> \delta}{V=v}$. Denote 
\[
H(u,\delta,v):=\prc{U_1\leq u+\delta/2}{V=v}-\prc{U_0\leq u-\delta/2}{V=v}.
\]
Consider any estimator $\hat{H}(u,\delta,v)$ of $H(u,\delta,v)$ based on a sample
\[
\{(U_{i0},V_i)\}_{i=1}^{n_0}\cup\{(U_{i1},V_i)\}_{i=1}^{n_1}
\]
such that $(U_{i0},V_{i})$ ($i=1,\dots,n_0$) are independent and identically distributed copies of  $(U_0,V)$, while $(U_{i1},V_i)$ $(i=1,\dots,n_1)$ are independent and identically distributed copies of  $(U_1,V)$. Now let
\[
\hat{\gamma}^L(\delta,v):= - \min\left[\inf_u\left\{\hat{H}(u,\delta,v)\right\},0\right]
\]
and
\[
\hat{\gamma}^U(\delta,v):= 1-\max\left[\sup_u\left\{\hat{H}(u,\delta,v)\right\},0\right].
\]
We claim for every $\delta\in\mathbb{R}$ and every $v\in\text{support}(V)$ that:
\[
\hat{\gamma}^L(\delta,v)-\gamma^L(\delta,v)\ \leq\ \sup_u\curl{ H(u,\delta,v)-\hat{H}(u,\delta,v)+\abs{\hat{H}(u,\delta,v)-H(u,\delta,v)}}/2;
\]
\[
\gamma^U(\delta,v)-\hat{\gamma}^U(\delta,v)\ \leq\ \sup_u\curl{ \hat{H}(u,\delta,v)-H(u,\delta,v)+\abs{ \hat{H}(u,\delta,v)-H(u,\delta,v) }}/2;
\]
and
\[
\abs{ \hat{\gamma}^L(\delta,v)-\gamma^L(\delta,v) }\vee\abs{ \hat{\gamma}^U(\delta,v)-\gamma^U(\delta,v) }\ \leq\ \sup_u \abs{ \hat{H}(u,\delta,v)-H(u,\delta,v) }.
\]

\begin{proof}

For any real-valued function $g(t)$, denote its positive and negative parts, respectively, as:
\[
g^+(t)=-\max\left[g(t),0\right]\text{ and }g^-(t)= -\min\left[g(t),0\right].
\]
We have the following properties we will make use of:
\[
g(t)=g^+(t)-g^-(t);
\]
\[
|g(t)|=g^+(t) +g^-(t);
\]
\[
g^+(t)=\sqbrack{ |g(t)| + g(t) }/{2};
\]
\[
g^-(t)=\sqbrack{ |g(t)| - g(t) }/{2}.
\]

Below, we will use the positive and negative parts of 
\[
g_{\inf}(\delta,v):=\inf_u H(u,\delta,v)\text{ and }g_{\sup}(\delta,v):=\sup_u H(u,\delta,v)
\]
along with
\[
\hat{g}_{\inf}(\delta,v):=\inf_u\hat{H}(u,\delta,v)\text{ and }\hat{g}_{\sup}(\delta,v):=\sup_u\hat{H}(u,\delta,v).
\]

Consider:

For the lower bound on $\prc{U_1-U_0>\delta}{V=v}$, we are bounding the difference
\[\begin{aligned}
&\hat{\gamma}^{L}(\delta,v)-\gamma^{L}(\delta,v)\\
=\ &\hat{g}_{\inf}^-(\delta,v)-g_{\inf}^-(\delta,v)\\
\overset{(i)}{=}\ &\frac{1}{2}\sqbrack{ g_{\inf}(\delta,v)-\hat{g}_{\inf}(\delta,v) +\abs{\hat{g}_{\inf}(\delta,v)}-\abs{g_{\inf}(\delta,v)} }  \\
\overset{(ii)}{\leq}\ &\frac{1}{2}\sqbrack{ g_{\inf}(\delta,v)-\hat{g}_{\inf}(\delta,v) +\abs{ \hat{g}_{\inf}(\delta,v)-g_{\inf }(\delta,v) }}\\
=\ &\frac{1}{2}\sqbrack{\inf_uH(u,\delta,v)-\inf_u\hat{H}(u,\delta,v)  +\abs{\inf_u\hat{H}(u,\delta,v)-\inf_uH(u,\delta,v)}}.
\end{aligned}\]
In (i), we used the properties of the negative part of a function introduced above, while in (ii) we used reverse triangle inequality. Similarly, we have that:
\[
\abs{\gamma^{L}(\delta,v)-\hat{\gamma}^{L}(\delta,v)}\ \leq\ \abs{\inf_u\hat{H}(u,\delta,v)-\inf_uH(u,\delta,v)}.
\]

Consider that
\[\begin{aligned}
&\inf_u\hat{H}(u,\delta,v)-\inf_u H(u,\delta,v)\\
=\ &\inf_u\left\{\hat{H}(u,\delta,v)-H(u,\delta,v)+H(u,\delta,v)\right\}-\inf_u H(u,\delta,v)\\
\overset{(i)}{=}\ &\inf_u\left\{\hat{H}(u,\delta,v)-H(u,\delta,v)\right\}\\
\leq\ &\sup_u\left\{\hat{H}(u,\delta,v)-H(u,\delta,v)\right\}\\
\end{aligned}\]
Equality (i) follows from the fact that $\inf\{a+b:\ a\in A,b\in B\}=\inf(A)+\inf(B)$. Similarly,
\[\begin{aligned}
&\inf_u{H}(u,\delta,v)-\inf_u \hat{H}(u,\delta,v)\\
=\ &\inf_u\left\{\hat{H}(u,\delta,v)-H(u,\delta,v)+H(u,\delta,v)\right\}-\inf_u H(u,\delta,v)\\
\overset{\ }{=}\ &\inf_u\curl{H(u,\delta,v)-\hat{H}(u,\delta,v)}\\
\leq\ &\sup_u\curl{H(u,\delta,v)-\hat{H}(u,\delta,v)}\\
\end{aligned}\]

The previous two equations imply that

\[\begin{aligned}
\abs{\inf_u{H}(u,\delta,v)-\inf_u \hat{H}(u,\delta,v)}\leq\sup_u\abs{\hat{H}(u,\delta,v)-H(u,\delta,v)}.
\end{aligned}\]
So overall we have that
\begin{equation}\begin{aligned}\label{eqn:keyLemLwrPartNoAbs}
{\hat{\gamma}^{L}(\delta,v)-\gamma^{L}(\delta,v)}\leq\sup_u\sqbrack{ H(u,\delta,v)-\hat{H}(u,\delta,v)+\abs{\hat{H}(u,\delta,v)-H(u,\delta,v)}}/2\\
\end{aligned}\end{equation}
along with
\begin{equation}\begin{aligned}\label{eqn:keyLemLwrPart}
\left|\hat{\gamma}^{L}(\delta,v)-\gamma^{L}(\delta,v)\right|\leq \sup_u\abs{\hat{H}(u,\delta,v)-H(u,\delta,v)}.\\
\end{aligned}\end{equation}

Here, \eqref{eqn:keyLemLwrPartNoAbs} is the first desired conclusion.

For the upper bound on $\prc{ U_1-U_0>\delta }{V=v}$, we are bounding the difference
\[\begin{aligned}
&\gamma^{U}(\delta,v)-\hat{\gamma}^{U}(\delta,v)\\
=\ &\hat{g}_{\sup}^+(\delta,v)-g_{\sup}^+(\delta,v)\\
\overset{(i)}{=}\ &\frac{1}{2}\sqbrack{ \hat{g}_{\sup}(\delta,v)-g_{\sup}(\delta,v) +\abs{\hat{g}_{\sup}(\delta,v)}-\abs{g_{\sup}(\delta,v)} } \\
\overset{(ii)}{\leq}\ &\frac{1}{2}\sqbrack{ \hat{g}_{\sup}(\delta,v)-g_{\sup }(\delta,v)+\abs{ g_{\sup }(\delta,v)-\hat{g}_{\sup}(\delta,v) }}\\
=\ &\frac{1}{2}\curl{\sup_u \hat{H}(u,\delta,v)-\sup_u H(u,\delta,v)+\abs{\sup_u H(u,\delta,v)-\sup_u\hat{H}(u,\delta,v)}}.
\end{aligned}\]
In (i), we used the properties of the positive part of a function introduced above, while in (ii) we used triangle inequality followed by reverse triangle inequality. Noting that $\sup\{a+b:\ a\in A,b\in B\}=\sup(A)+\sup(B)$, we can arrive at the below equalities based on similar steps to the case with the lower bound:

\[\begin{aligned}
\sup_u{H}(u,\delta,v)-\sup_u \hat{H}(u,\delta,v)=\sup_u\curl{H(u,\delta,v)-\hat{H}(u,\delta,v)}
\end{aligned}\]
and
\[\begin{aligned}
\sup_u \hat{H}(u,\delta,v)-\sup_u{H}(u,\delta,v)=\sup_u\curl{\hat{H}(u,\delta,v)-H(u,\delta,v)}.
\end{aligned}\]

So overall we have that:
\begin{equation}\begin{aligned}\label{eqn:keyLemUprPartNoAbs}
\gamma^{U}(\delta,v)-\hat{\gamma}^{U}(\delta,v)\leq \sup_u\curl{\hat{H}(u,\delta,v)-H(u,\delta,v)+\abs{\hat{H}(u,\delta,v)-H(u,\delta,v)}}/2,\\
\end{aligned}\end{equation}
along with
\begin{equation}\begin{aligned}\label{eqn:keyLemUprPart}
\left|\hat{\gamma}^{U}(\delta,v)-\gamma^{U}(\delta,v)\right|\leq \sup_u\abs{\hat{H}(u,\delta,v)-H(u,\delta,v)},\\
\end{aligned}\end{equation}
as with the lower bound estimate. Here, \eqref{eqn:keyLemUprPartNoAbs} is the second desired conclusion.

The inequalities in \eqref{eqn:keyLemLwrPart} and \eqref{eqn:keyLemUprPart} give us the third desired conclusion:
\[\begin{aligned}
&\curl{\abs{ \hat{\gamma}^L(\delta,v)-\gamma^L(\delta,v) }\vee\abs{ \hat{\gamma}^U(\delta,v)-\gamma^U(\delta,v) }}&\leq\ & \sup_u \abs{ \hat{H}(u,\delta,v)-H(u,\delta,v) }.
\end{aligned}\]

\end{proof}
\end{lemma}

\begin{lemma}[Inequality for a sum of random variables\label{lem:bndProbOfSum}]

Let $U$ and $V$ be arbitrary real-valued random variables, and let $u,v\in\mathbb{R}$ be non-random scalars. We have that:
\[
\pr{U+V>u+v}\leq\pr{U>u}+\pr{V>v}.
\]
\begin{proof}

Consider that we have the following:
\[
\{U+V\leq u+v\}\supseteq \{U\leq u\}\cap\{V\leq v\}\iff\{U+V> u+v\}\subseteq \{U>u\}\cup\{V> v\}.
\]
This containment of events holds because $U\leq u$ and $V\leq v$ implies that $U+V\leq u+v$. It follows that 
\[
\pr{U+V>u+v}\leq \pr{\{U>u\}\cup\{V>v\}}\leq \pr{U>u}+\pr{V>v},
\]
where the second inequality is due to union bound.
\end{proof}
\end{lemma}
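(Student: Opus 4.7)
The plan is to derive the inequality by rewriting both sides in terms of their complementary events and exploiting the elementary deterministic implication that $U \leq u$ together with $V \leq v$ forces $U+V \leq u+v$. Since this implication is pointwise, it lifts directly to a containment of events on the underlying probability space, after which monotonicity of probability and a standard two-set union bound finish the argument.

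Concretely, I would first observe the set inclusion $\{U \leq u\} \cap \{V \leq v\} \subseteq \{U+V \leq u+v\}$, which holds because any outcome $\omega$ with $U(\omega) \leq u$ and $V(\omega) \leq v$ automatically satisfies $U(\omega)+V(\omega) \leq u+v$. Taking complements reverses the inclusion and turns the intersection into a union, yielding $\{U+V > u+v\} \subseteq \{U > u\} \cup \{V > v\}$. Applying monotonicity of $\Pr(\cdot)$ to this inclusion and then the union bound $\Pr(A \cup B) \leq \Pr(A)+\Pr(B)$ delivers the desired inequality.

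There is essentially no obstacle here; the argument is a two-line exercise once the correct event containment is written down. The only place where one needs to be a touch careful is the direction of the inclusion after taking complements, so I would spell out the contrapositive step explicitly rather than hand-wave it. No regularity or integrability assumptions on $U$ and $V$ are needed, which matches the generality stated in the lemma.
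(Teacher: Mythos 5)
Your proposal is correct and follows the same route as the paper's proof: the pointwise implication gives the event containment $\{U\leq u\}\cap\{V\leq v\}\subseteq\{U+V\leq u+v\}$, complementation yields $\{U+V>u+v\}\subseteq\{U>u\}\cup\{V>v\}$, and monotonicity plus the union bound finish the argument.
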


\begin{lemma}[Uniform estimation guarantee for the IPTW eCDF in Equation \eqref{eqn:eCDFmarg}\label{lem:eCDFGuarantee}]\ \\

We have that 
\[
\pr{\sup_y\abs{ \hat{F}_w(y|x)-F_w(y|x) }\ \leq\  n_{\ell(x )}^{-1/2} C_{w,\beta}(x) } \geq 1-\beta.
\]
Here,
\[
C_{w,\beta}(x):=\begin{cases}
[2^{-1}n_{\ell(x )} n_{w\ell(x)}^{-1}\log(2/\beta)]^{1/2}&\text{ if }e(x')=n_{1\ell(x)}/n_{\ell(x )}\text{ for each }x'\in\xcal_{\ell(x)}\\
\frac{2\sqbrack{ 2\log^{\frac{1}{2}}\parenth{ n_{\ell(x)}+1 }+\log^{\frac{1}{2}}(1/\beta)}}{ n_{\ellx} ^{1/2}\parenth{ w\underbar{e}+(1-w)(1-\bar{e}) } }&\text{ otherwise }
\end{cases}.
\]

\begin{proof}\ \\
\textbf{\underline{Case 1: randomized experiment}}\\

For the case the $e(x')=n_{1\ell(x)}/n_{\ell(x )}$ for each $x'\in\xcal_{\ellx}$, the estimator in \eqref{eqn:eCDFmarg} reduces to:
\[
\frac{1}{n_{w\ell(x)}}\sum_{i:\ W_i=w,\ellxi=\ellx}\indic{Y_i\leq y}.
\]
When $(Y_1,W_1,X_1),\dots,(Y_n,W_n,X_n)$ are iid copies to begin with, it follows that $Y_i$ is iid among indices such that $W_i=w$ and $\ellxi=\ellx$. So the iid assumption of the Dvoretzky-Kiefer-Wolfowitz-Massart (DKWM) inequality holds \citep{DKWIneq1956,massartDKWIneq1990,NaamanDKWIneq2021,overlap2021}. This gives us the desired conclusion.

\textbf{\underline{Case 2: general observational setting}}\\

Let $t> 0$. Let $S\in\curl{-1,1}$. Consider that:

\begin{equation}\begin{aligned}\label{eqn:funcHoeff}
&\sup_{y}\abs{\hat{F}_{w}(y\mid x)-F_w(y\mid x) }\\
=\ &\sup_{y,S}S\curl{\frac{1}{n_{\ell(x )} }\sum_{ i:\ \ellxi=\ellx }\parenth{\frac{\indic{Y_i\leq y}\indic{W_i=w}}{ we(X_i)+(1-w)\sqbrack{1-e(X_i)} }-\EC{\frac{\indic{Y_i\leq y}\indic{W_i=w}}{ we(X_i)+(1-w)\sqbrack{1-e(X_i)} } }{\ell(X_i)=\ell(x)} } }\\
\overset{(i)}{\leq}\ &t+\E{ \sup_{y,S}S\curl{\frac{1}{n_{\ell(x )} }\sum_{ i:\ \ellxi=\ellx }\parenth{\frac{\indic{Y_i\leq y}\indic{W_i=w}}{ we(X_i)+(1-w)\sqbrack{1-e(X_i)} }-\EC{\frac{\indic{Y_i\leq y}\indic{W_i=w}}{ we(X_i)+(1-w)\sqbrack{1-e(X_i)} }}{\ell(X_i)=\ell(x)} } } }\\
\overset{(ii)}{\leq}\ & t+\frac{4\log^{\frac{1}{2} }(n_{\ell(x)}+1) }{ n_{\ell(x)}^{\frac{1}{2} }\parenth{w\underbar{e}+(1-w)(1-\bar{e})} }.\\
\end{aligned}\end{equation}

Here, inequality (i) holds with probability at least 
\[1-\exp\parenth{ -\frac{n_{\ellx} t^2 \sqbrack{ w\underbar{e}+(1-w)(1-\bar{e}) }^2 }{4} }.\] 
It is by Functional Hoeffding's inequality \citep{wainwright_2019}; its premise holds: with probability 1,
\[
0\ \leq\ \sup_{y,S } S\curl{\frac{\indic{Y_i\leq y}\indic{W_i=w}}{ we(X_i)+(1-w)\sqbrack{1-e(X_i)} }-\EC{\frac{\indic{Y_i\leq y}\indic{W_i=w}}{ we(X_i)+(1-w)\sqbrack{1-e(X_i)} }}{\ell(X_i)=\ell(x)} }\leq\ \frac{1}{w\underbar{e}+(1-w)(1-\bar{e}) }.
\]
Meanwhile, inequality (ii) holds with probability $1$ due to Lemma \ref{lem:expectedDevECDF}. 

With some algebra, we see that:
\[
\pr{ \sup_{y}\abs{\hat{F}_{w}(y)-F_w(y) }\leq  \frac{2\sqbrack{ 2\log^{\frac{1}{2}}\parenth{ n_{\ell(x)}+1 }+\log^{\frac{1}{2}}(1/\beta)}}{ n_{\ellx} ^{1/2}\parenth{ w\underbar{e}+(1-w)(1-\bar{e}) } } }\geq 1-\beta.
\]

\end{proof}

\end{lemma}

\subsubsection{Inferring an Off Policy Outcome Distribution}

Suppose the following generation process for our observed tuple of interest:
\[
(Y_i,W_i,X_i)\iid p(y|w,x)p(w|x)p(x);\ i=1,\dots,n.
\]
We are interested in inferring the outcome had we generated treatment with policy $\pi(w|x)$ rather than $p(w|x)$. Here, $\pi$ is understood to be a conditional probability density (pdf) or probability mass function (pmf) for $W_i$--a vector, discrete, and/or continuous treatment variable. We define the generative process for such a random Variable in Definition \ref{defn:opeOutcome}.

\begin{definition}[Off-Policy Outcome\label{defn:opeOutcome}]
Consider the generative process to obtain our outcome of interest,
\[
Y_i(\pi),
\]
under treatment assignment policy $\pi$:
\begin{enumerate}
    \item $X_i\sim p(x)$.
    \item Given $X_i=x$, $W_i\sim \pi(w|x)$.
    \item Given $X_i=x$ and $W_i=w$, $Y_i\sim p(y|w,x)$.
\end{enumerate}

\end{definition}

The potential random variable $Y_i(\pi)$ of Definition \ref{defn:opeOutcome} is exactly $Y_i$ when $\pi(w|x)=p(w|x)$. Additionally, when $W_i$ is discrete and $\pi(w|x)=1$ for a fixed value $w$, while
$\pi(w'|x)=0$ for $w'\neq w$, then 
\[
Y_i(\pi)\dot{=}Y_i(w)
\]
is exactly what is known as a potential outcome \citep{imbens_rubin_2015} or counterfactual \citep{pearlCausality,hernan2020}. This instance is the random variable we consider in the main text, but Definition \ref{defn:opeOutcome} is more general.

Denote the conditional CDF of $Y_i(\pi)$ given $\ell(X_i)=\ell(x)$ as 
\[
F_\pi(y|x):= \prc{Y_i(\pi)\leq y}{\ell(X_i)=\ell(x)}.
\]
Here, $\ell(x)=l$ if $x\in\xcal_l$, where
\[
\xcal_l\subseteq \xcal:=\text{support}(X_i)\ (l=1,\dots,L;\ L\geq 1).
\]

Our estimator is:
\begin{equation}\label{eqn:opeCDFEstimator}
\hat{F}_\pi(y|x):=\frac{1}{n_{\ell(x)}}\sum_{i:\ \ell(X_i)=\ell(x)}\frac{\indic{Y_i\leq y}\pi(W_i|X_i) }{ p(W_i|X_i) }.
\end{equation}

$\hat{F}_\pi(y|x)$ in \eqref{eqn:opeCDFEstimator} is unbiased for $F_\pi(y|x)$:
\begin{equation}\begin{aligned}
\label{eq7}
&\EC{\hat{F}_\pi(t|x)}{\ell(X_i)=l;i\ : \ell(X_i)=\ell(x)}\\
&=\ \frac{1}{n_{l}}\sum_{i:\ \ell(X_i)=l}\int_{(y,w,x)}\frac{\indic{Y_i\leq t}\pi(W_i|X_i) }{ p(W_i|X_i) }p(y,w,x|\ell(x)=l)\\
&=\ \frac{1}{n_{l}}\sum_{i:\ \ell(X_i)=l}\int_x\int_w\int_y \indic{y\leq t}p(y|w,x)\frac{\pi(w|x)}{p(w|x)}p(w|x) p(x|\ell(x)=l) \\
&=\ \frac{1}{n_{l}}\sum_{i:\ \ell(X_i)=l}\int_x\int_w\int_y \indic{y\leq t}p(y|w,x)\pi(w|x) p(x|\ell(x)=l) \\
&=\ \frac{1}{n_{l}}\sum_{i:\ \ell(X_i)=l}\int_x\int_{w} \prc{Y_i\leq t}{X_i=x,W_i=w}\pi(w|x) p(x|\ell(x)=l) \\
&=\ \frac{1}{n_{l}}\sum_{i:\ \ell(X_i)=l}\int_x \prc{Y_i(\pi)\leq t}{X_i=x}p(x|\ell(x)=l) \\
&=\ \frac{1}{n_{l}}\sum_{i:\ \ell(X_i)=l} \prc{Y_i(\pi)\leq t}{\ell(X_i)=l}\\
=\ & F_\pi(t|x).
\end{aligned}\end{equation}

In Lemma \ref{lem:expectedDevECDF}, we now study the finite sample performance of the Off-policy estimator in \eqref{eqn:opeCDFEstimator} across possible evaluation points $y$.

\begin{lemma}[Bounding the expected uniform deviation of the eCDF estimator\label{lem:expectedDevECDF}]\ \\

We have that

\[\begin{aligned}
&\EC{\sup_{y}\abs{\hat{F}_\pi(y|x)-F_\pi(y|x) }}{\ell(X_i)=\ell(x)}\\
\overset{}{\leq}\ & 4\norm{\frac{\pi(W_i|X_i)}{ p(W_i|X_i)  } }_{x,\infty}\log^{\frac{1}{2} }(n_{\ell(x)}+1) n_{\ell(x)}^{-\frac{1}{2} }.\
\end{aligned}\]
Here, $\norm{V_i}_{x,\infty}:=\inf\curl{t:\ \prc{V_i\leq t}{\ell(X_i)=\ell(x)}=1}$.

\begin{proof}\ \\

Let $(Y_i,X_i,W_i)\iid (Y_i^*,X_i^*,W_i^*)$ for $i=1,\dots,n$. Independent of these, let $\sigma_i\iid Rademacher$. Below, also let $S\in\curl{-1,1}$. Omitting the dependence on $\ell(X_i)=\ell(x)$, as it is a straightforward extension, we have:
\[\begin{aligned}
&\E{\sup_{y}\abs{\frac{1}{n}\sum_{i=1}^n\parenth{\frac{\indic{Y_i\leq y}\pi(W_i|X_i)}{ p(W_i|X_i) }-\E{\frac{\indic{Y_i\leq y}\pi(W_i|X_i)}{ p(W_i|X_i) }}} }}\\
=\ &\E{\sup_{y,S}S \curl{\frac{1}{n}\sum_{i=1}^n\parenth{\frac{\indic{Y_i\leq y}\pi(W_i|X_i)}{ p(W_i|X_i) }-\E{\frac{\indic{Y_i\leq y}\pi(W_i|X_i)}{ p(W_i|X_i) }}} }}\\
=\ &\E{\sup_{y,S}S \curl{\frac{1}{n}\sum_{i=1}^n\parenth{\frac{\indic{Y_i\leq y}\pi(W_i|X_i)}{ p(W_i|X_i) }-\E{\frac{\indic{Y_i^*\leq y}\pi(W_i^*|X_i^*)}{ p(W_i^*|X_i^*) }}} }}\\
\leq\ &\E{\sup_{y } \abs{\frac{1}{n}\sum_{i=1}^n\parenth{\frac{\indic{Y_i\leq y}\pi(W_i|X_i)}{ p(W_i|X_i) }-\E{\frac{\indic{Y_i^*\leq y}\pi(W_i^*|X_i^*)}{ p(W_i^*|X_i^*) }}} }}\\
\overset{}{\leq}\ & \E{\sup_{y } \abs{\frac{1}{n}\sum_{i=1}^n\parenth{\frac{\indic{Y_i\leq y}\pi(W_i|X_i)}{ p(W_i|X_i) }-\frac{\indic{Y_i^*\leq y}\pi(W_i^*|X_i^*)}{ p(W_i^*|X_i^*) }}} }\\
\overset{}{=}\ & \E{\sup_{y } \abs{\frac{1}{n}\sum_{i=1}^n\sigma_i \parenth{\frac{\indic{Y_i\leq y}\pi(W_i|X_i)}{ p(W_i|X_i) }-\frac{\indic{Y_i^*\leq y}\pi(W_i^*|X_i^*)}{ p(W_i^*|X_i^*) }}} }\\
\overset{}{\leq}\ & 2\E{\sup_{y } \abs{\frac{1}{n}\sum_{i=1}^n\sigma_i \frac{\indic{Y_i\leq y}\pi(W_i|X_i)}{ p(W_i|X_i) } } }\\
\overset{(i)}{\leq}\ & 4\norm{\frac{\pi(W_i|X_i)}{ p(W_i|X_i)  } }_{\infty}\log^{\frac{1}{2} }(n+1) n^{-\frac{1}{2} }.\\
\end{aligned}\]

Up to right before inequality (i), we follow the standard symmetrization argument for bounding the expectation of an empirical process in terms of what is called the Rademacher complexity of a function class that indexes the empirical process. See for example,  \citet{wainwright_2019} and \citet{vanderVaart1996}.

Inequality (i) follows due to \textit{Lemma 4.14} in \citet{wainwright_2019}. This lemma covers what is called the Rademacher complexity, an expectation of the across independent random variables $(T_1,\dots,T_n)$ and $(\sigma_1,\dots,\sigma_n)$:
\[
\E{\sup_{g\in\mathcal{G}}\abs{\frac{1}{n}\sum_{i=1}^n\sigma_i g(T_i) }}.
\]
The Lemma particularly covers the class of functions $\mathcal{G}$ satisfying \eqref{eqn:boundedPolyDescr}. Denote the evaluation of these functions on a finite set of points as: 
\[
\mathcal{G}(t_1,\dots,t_n)=\curl{g(t):\ t\in\curl{t_1,\dots,t_n},g\in\mathcal{G}}.
\]
\textit{Lemma 4.14} in \citet{wainwright_2019} requires the existence of some $\nu>0$ such that the cardinality of $\mathcal{G}(t_1,\dots,t_n)$ is bounded by a polynomial as:
\begin{equation}\label{eqn:boundedPolyDescr}
\abs{\mathcal{G}(t_1,\dots,t_n)}\leq \parenth{n+1}^\nu,
\end{equation}
a condition known as "Polynomial Descrimination."

In our case, the random variable $T_i=(Y_i,W_i,X_i)$ and our infinitely sized function class, indexed by $y$ and evaluated at $t=(a,b,c)$, is:
\[
\mathcal{G}=\curl{ \frac{\indic{a\leq y}\pi(b|c)}{ p(b|c) }:\ y\in\mathbb{R}  }.
\]
Consider that for any $t=(a,b,c)$ and any $y$, it must be the case that 
\[
\frac{\indic{a\leq y}\pi(b|c)}{ p(b|c) }\in\curl{0,\frac{\pi(b|c)}{ p(b|c) }}.
\]
Thus, \eqref{eqn:boundedPolyDescr} is satisfied with $\nu=1$.

\end{proof}
\end{lemma}

\subsection{The proof of Proposition \ref{prop:mainResultMargPIBT} \label{append:proofMainResMargPIBT}}

\begin{proof}[Proof of Proposition \ref{prop:mainResultMargPIBT}]

Let $x\in\xcal$ be fixed. We apply Lemma \ref{lem:condMakBoundEst} with $U_0:=Y_i(0)$, $U_1:=Y_i(1)$, and $V=\ell(X_i)$, which gives:
\begin{equation}\begin{aligned}\label{eqn:ineqMargCaseProof}
&\sup_\delta \curl{\abs{ \hat{\theta}^L(\delta,x)-\theta^L(\delta,x) }\vee\abs{ \hat{\theta}^U(\delta,x)-\theta^U(\delta,x) }}\\
\leq\ & \sup_{\delta,y} \abs{ \curl{\hat{F}_{1n}(y+\delta/2\mid x)-\hat{F}_{0n}(y-\delta/2\mid x )}-\curl{{F}_{1}(y+\delta/2 \mid x )-{F}_{0}(y-\delta/2 \mid x) } }\\
\leq\ &\sup_{\delta,y} \abs{ \hat{F}_{1n}(y+\delta/2 \mid x )-{F}_{1}(y+\delta/2 \mid x ) }+\sup_{\delta,y} \abs{ \hat{F}_{0n}(y-\delta/2 \mid x )-{F}_{0}(y-\delta/2 \mid x ) }\\
=\ &\sup_{y} \abs{ \hat{F}_{1n}(y\mid x)-{F}_{1}(y \mid x)  }+\sup_{y} \abs{ \hat{F}_{0n}(y\mid x)-{F}_{0}(y\mid x) }\\
\overset{(i)}{\leq}\ &n_{\ell(x )}^{-1/2}\sum_{w=0,1} C_{w,\beta/2}(x).
\end{aligned}\end{equation}

Here, inequality (i) holds with probability at least $1-\beta$ due to an application of Lemma \ref{lem:eCDFGuarantee} along with Lemma \ref{lem:bndProbOfSum}.

So far, $x$ is fixed. We now consider the two cases as stated in the proposition.

\textbf{Case 1:} Consider evaluating the bound estimators at a random new test point, $\xnp$, independent of our training data. We have:
\[\begin{aligned}
&\pr{ \sup_\delta \curl{\abs{ \hat{\theta}^L(\delta,\xnp)-\theta^L(\delta,\xnp) }\vee\abs{ \hat{\theta}^U(\delta,\xnp)-\theta^U(\delta,\xnp) }} \leq n_{\ell(\xnp )}^{-1/2}\sum_{w=0,1} C_{w,\alpha/2}(\xnp) }\\
\overset{(i)}{=}\ &\sum_{ l=1}^L\pr{\ell(\xnp)=l}\\
&\times \pr{ \sup_\delta \curl{\abs{ \hat{\theta}^L(\delta,x_l)-\theta^L(\delta,x_l) }\vee\abs{ \hat{\theta}^U(\delta,x_l)-\theta^U(\delta,x_l) }} \leq n_{\ell(x_l)}^{-1/2}\sum_{w=0,1} C_{w,\alpha/2}(x_l) }\\
\overset{(ii)}{\geq}\ &1-\alpha.
\end{aligned}\]

Equality (i) holds by law of total probability. Here, we insert arbitrary fixed points $x_l\in\xcal_l$ ($l=1,\dots,L$). The conditioning statement in the second factor inside the sum is omitted since $\xnp$ is independent of the bound estimators after we insert the point $x_l$. Next, inequality (ii) holds by applying \eqref{eqn:ineqMargCaseProof} at each point $x_l$ with $\beta=\alpha/2$.

\textbf{Case 2:} Now consider a uniform guarantee on the bound estimators across $x\in\xcal$, in addition to across $\delta\in\mathbb{R}$.

We have:
\[\begin{aligned}
&\pr{ \sup_{\delta,x} \curl{\abs{ \hat{\theta}^L(\delta,x)-\theta^L(\delta,x) }\vee\abs{ \hat{\theta}^U(\delta,x)-\theta^U(\delta,x) }} \leq t }\\
\geq\ &1-\alpha
\end{aligned}\]

by union bound provided we set
\[
t\ \dot{=}\ \max_l \curl{n_l^{-1/2}\sum_{w=0,1} C_{w,\alpha/(2L)}(x_l)}.
\]

\end{proof}

\subsection{The proof of Proposition \ref{prop:mainResConditionalGeneral}\label{append:proofMainResCondGen}}

\begin{proof}[Proof of Proposition \ref{prop:mainResConditionalGeneral}]

This result is a direct consequence of Lemma \ref{lem:condMakBoundEst} with $U_{i0}:=Y_i(0)$, $U_{i1}:=Y_i(1)$, and $V_i:=X_i$. 
\end{proof}

\subsection{The proof of Corollary \ref{cor:mainResDistrRegr}\label{append:ProofMainResDistrRegr}}
\begin{proof}[Proof of Corollary \ref{cor:mainResDistrRegr}]

Specify
\[
\hat{G}(y,\delta,x) = \hat{F}_{1n}(y+\delta/2\mid x)-\hat{F}_{0n}(y-\delta/2\mid x),
\]
which simply plugs in the estimate of $F_{1}(y+\delta/2\mid x)$ and $F_{0}(y-\delta/2\mid x)$. Consider that:

\[
\sup_y\abs{ \hat{F}_{wn}(y\mid x)-F_w(y\mid x) }=\sup_{a,y}\abs{ \hat{F}_{wn}(y+a\mid x)-F_w(y+a\mid x) },
\]
where the supremum is with respect to $y$ is across the real line, while the supremum across $(y,a)$ is across the euclidean plane. Therefore, when the inequality
\begin{equation}\label{eqn:distrRegrIneq}
\sup_y\abs{ \hat{F}_{wn}(y\mid x)-F_w(y\mid x) }\leq t_{w,\alpha}(x)
\end{equation}
holds, triangle inequality and the definition of $G(y,\delta,x)$ tell us that
\begin{equation}\begin{aligned}\label{eqn:distrTrtmntEffIneq}
&\sup_{\delta,y}\abs{ \hat{G}_{n}(y,\delta,x)-G(y,\delta,x) }&\leq\ &\sup_{\delta,y}\abs{ \hat{F}_{1n}(y+\delta/2\mid x)-F_w(y+\delta/2\mid x) }\\
&&&+\sup_{\delta,y}\abs{ \hat{F}_{0n}(y-\delta/2\mid x)-F_0(y-\delta/2\mid x) }\\
&&=\ &\sup_{y}\abs{ \hat{F}_{1n}(y\mid x)-F_w(y\mid x) }+\sup_{y}\abs{ \hat{F}_{0n}(y\mid x)-F_0(y\mid x) }\\
&&\leq\ &\sum_{w=0,1}t_{w,\alpha}(x).
\end{aligned}\end{equation}
Combining inequality \eqref{eqn:distrTrtmntEffIneq} and inequality \eqref{eqn:keyIneqMainResX} in Proposition \ref{prop:mainResConditionalGeneral} (after taking the supremum on both sides with respect $\delta$), we get:

\begin{equation}\label{eqn:corDevBndsConfLev}
\sup_\delta\curl{\left|\hat{\theta}^{L}(\delta,x)-\theta^{L}(\delta,x)\right|\vee\left|\hat{\theta}^{U}(\delta,x)-\theta^{U}(\delta,x)\right|}\leq\sum_{w=0,1} t_{w,\alpha}(x),
\end{equation}
also. Now, if the inequality in \eqref{eqn:distrRegrIneq} holds with high probability, which is the premise of this corollary, then so must \eqref{eqn:corDevBndsConfLev}. 
\end{proof}

\subsection{The proof of Proposition \ref{prop:mainResRegrResids}\label{append:proofResidApproach}}

\begin{proof}[Proof of Proposition \ref{prop:mainResRegrResids}]

Given inequality \eqref{eqn:keyIneqMainResX} in Proposition \ref{prop:mainResConditionalGeneral}, our specification for 
\[
\hat{G}(y,\delta,x):=\hat{F}_{1n}(y\mid x)-\hat{F}_{0n}(y\mid x),
\]
and a similar argument to the proof of Corollary \ref{cor:mainResDistrRegr} in Appendix \ref{append:ProofMainResDistrRegr}, we have that:

\begin{equation}\label{eqn:KeyIneqResidCase}
\sup_x\curl{\left|\hat{\theta}^{L}(\delta,x)-\theta^{L}(\delta,x)\right|\vee\left|\hat{\theta}^{U}(\delta,x)-\theta^{U}(\delta,x)\right|}\leq \sum_{w=0,1}\sup_{y,x}\abs{ \hat{F}_{wn}(y\mid x)-F_{w}(y\mid x) }.
\end{equation}
Consider $s_0,s_1\geq 0$. Due to Lemma \ref{lem:bndProbOfSum} and \eqref{eqn:KeyIneqResidCase}, we have that:
\begin{equation}\begin{aligned}\label{eqn:Key2IneqResidCase}
&{\rm pr}\sqbrack{ \sup_x\curl{\left|\hat{\theta}^{L}(\delta,x)-\theta^{L}(\delta,x)\right|\vee\left|\hat{\theta}^{U}(\delta,x)-\theta^{U}(\delta,x)\right|}> s_0+s_1 \mid \Xmat }\\
\leq\ &{\rm pr}\curl{ \sum_{w=0,1}\sup_{y,x}\abs{ \hat{F}_{wn}(y\mid x)-F_{w}(y\mid x) }>s_0+s_1 \mid \Xmat}\\
\leq\ &\sum_{w=0,1}{\rm pr}\curl{ \sup_{y,x}\abs{ \hat{F}_{wn}(y\mid x)-F_{w}(y\mid x) }>s_w \mid \Xmat}.
\end{aligned}\end{equation}
Let $t_w\geq 0$. Lemma \ref{lem:condCDFEstResids} tells us that for 
\[\begin{aligned}
&s_w&\dot{=}\ &\sup_r  \sqbrack{{\rm pr}\curl{ r<R_i(w)\leq r+2t_w \mid \Xmat} \vee {\rm pr}\curl{ r-2t_w<R_i(w)\leq r \mid \Xmat} }\\
&&&+\curl{\frac{\log(4/\alpha)}{2} }^{1/2} n_w^{-\frac{1}{2}},
\end{aligned}\]
we get:
\begin{equation}\label{eqn:Key3IneqResidCase}
{\rm pr}\curl{ \sup_{y,x}\abs{ \hat{F}_{wn}(y\mid x)-F_{w}(y\mid x) }>s_w \mid \Xmat}\leq \alpha/2+{\rm pr}\curl{ \sup_x\abs{ \hat{\mu}_w(x)-\mu_w(x) } > t_w \mid \Xmat }.
\end{equation}
Now, combining \eqref{eqn:Key2IneqResidCase} and \eqref{eqn:Key3IneqResidCase}, we have conditional on $\Xmat$:
\[\begin{aligned}
&{\rm pr}\sqbrack{ \sup_x\curl{\left|\hat{\theta}^{L}(\delta,x)-\theta^{L}(\delta,x)\right|\vee\left|\hat{\theta}^{U}(\delta,x)-\theta^{U}(\delta,x)\right|}> s_0+s_1 \mid \Xmat }\\
\leq\ &\alpha+\sum_{w=0,1}{\rm pr}\curl{ \sup_x\abs{ \hat{\mu}_w(x)-\mu_w(x) } > t_w \mid \Xmat },\\
\end{aligned}\]
which is the desired conclusion.

\end{proof}

\subsection{A lemma for Proposition \ref{prop:mainResRegrResids}: conditional $\cdf$ estimation with regression residuals}

\begin{lemma}[Conditional $\cdf$ estimator with regression residuals\label{lem:condCDFEstResids}]

Let $(U_1,V_1),\dots,(U_n,V_n)$ be independent and identically distributed random variables with $U_1,\dots,U_n\in\mathbb{R}$. Partition the training indices $\{1,\dots,n\}$ into two non-intersecting splits, $\I_1$ and $\I_2$, respectively. Let $\Tcal_j:=\{(U_i,V_i)\}_{i\in\I_j}$ ($j=1,2$). Denote $\mu(v):=\E{U_i\mid V_i=v}$, and let $\hat{\mu}(v)$ denote its estimator based on $\Tcal_1$. Denote ${R}_i:=U_i-\mu(V_i)$ along with its approximation given by $\hat{R}_i:=U_i-\hat{\mu}(V_i)$ across $i=1,\dots,n$. For $u\in\text{support}(U_i)$ and $v\in\text{support}(V_i)$, denote $F(u\mid v):=\prc{U\leq u}{V=v}$. Consider its estimator given by:
\[
\hat{F}_n(u\mid v):=\frac{1}{|\I_2|}\sum_{i\in\I_2}\indic{ \hat{\mu}(v)+ \hat{R}_i\leq u },
\]
where $|\I_2|$ is the number of indices in $\I_2$. 

If $R_i\indep V_i$, then conditional on $\Vmat=(V_i;i\in\I_1)$, we have that for any $t\geq 0$, where $t$ is possibly dependent on $\Vmat$, that:
\[\begin{aligned}
&\sup_{u,v}\abs{ \hat{F}(u\mid v)-F(u\mid v)  }\\
\leq\ &\sup_r  \curl{\prc{ r<R_i\leq r+2t }{\Vmat} \vee \prc{ r-2t<R_i\leq r }{\Vmat} }+\curl{\frac{\log(4/\alpha)}{2} }^{1/2}|\I_2|^{-\frac{1}{2}} \\
\end{aligned}\]
with probability at least
\[
1-\alpha/2-{\rm pr}\curl{ \sup_v\abs{ \hat{\mu}(v)-\mu(v) }\geq t \mid \Vmat}. 
\]

\begin{proof}

Below, let the probability statements be with respect to $i\not\in\I_1$, where $\I_1$ is the set of indices used to train $\hat{\mu}$. This is important to note as we will use that $R_i\indep (\Tcal_1,\Vmat)$ later. Consider that:
\begin{equation}\begin{aligned}\label{eqn:controllingResidDistrReg}
&\sup_{u,v}\abs{ \hat{F}_{n}(u\mid v)-F(u\mid v) }\\
=\ &\sup_{u,v}\abs{ \hat{F}_{n}(u\mid v)-{\rm pr}\curl{ \mu(v)+R_i\leq u } }\\
\leq\ &\sup_{u,v}\abs{ \hat{F}_{n}(u\mid v)-{\rm pr}\curl{ \hat{\mu}(v)+\hat{R}_i\leq u \mid \Tcal_1} }\\
&+\sup_{u,v}\abs{ {\rm pr}\curl{ \hat{\mu}(v)+\hat{R}_i\leq u \mid \Tcal_1}-{\rm pr}\curl{ \mu(v)+R_i\leq u } }\\
\overset{(i)}{=}\ &\sup_{u,v}\abs{ \frac{1}{|\I_2|}\sum_{ i\in\I_2 }\indic{ U_i-\hat{\mu}(V_i)\leq u-\hat{\mu}(v) }-{\rm pr}\curl{ U_i-\hat{\mu}(V_i)\leq u-\hat{\mu}(v) \mid \Tcal_1} }\\
&+\sup_{u,v}\abs{ {\rm pr}\sqbrack{ {R}_i\leq u-\mu(v)+\curl{\hat{b}(V_i)-\hat{b}(v)} \mid \Tcal_1}-{\rm pr}\curl{ R_i\leq u-\mu(v) } }.\\
\end{aligned}\end{equation}
For any $v$, define:
\begin{equation}\label{eqn:biasRegrFunc}
\hat{b}(v):=\hat{\mu}(v)-\mu(v),
\end{equation}
a term that characterizes the bias in $\hat{\mu}(v)$ for any $v\in\text{support}(V_i)$. In equality (i), we used the identities $\hat{\mu}(v)=\mu(v)+\hat{b}(v)$ and $\hat{R}_i=R_i-\hat{b}(V_i)$, which are a consequence of the definition of $\hat{b}(\cdot)$.

Now, denote:
\[
A:=\sup_{u,v}\abs{ \frac{1}{|\I_w|}\sum_{ i\in \I_2 }\indic{ U_i-\hat{\mu}(V_i)\leq u-\hat{\mu}(v) }-{\rm pr}\curl{ U_i-\hat{\mu}(V_i)\leq u-\hat{\mu}(v) \mid \Tcal_1} }
\]
and
\[
B:=\sup_{u,v}\abs{ {\rm pr}\sqbrack{ {R}_i\leq u-\mu(v)+\curl{\hat{b}(V_i)-\hat{b}(v)} \mid \Tcal_1}-{\rm pr}\curl{ R_i\leq u-\mu(v) } }.
\]
Due to \eqref{eqn:controllingResidDistrReg} and Lemma \ref{lem:bndProbOfSum}:
\begin{equation}\begin{aligned}\label{eqn:devBoundResidApproach}
&{\rm pr}\curl{\sup_{u,v}\abs{ \hat{F}_{n}(u\mid v)-F(u\mid v) }> a+b \mid \Vmat}\\
\leq\ &\prc{ A+B> a+b }{\Vmat}\\
\leq\ &\prc{A> a}{\Vmat}+\prc{B> b}{\Vmat}.
\end{aligned}\end{equation}
We control the two terms in the latter inequality separately in \eqref{eqn:termALemma} and \eqref{eqn:termBLemma} below. We also explain the choices 
\[
a\ \dot{=}\ \curl{ 2^{-1}(|\I_2|)^{-1}\log(4/\alpha) }^{1/2}
\]
and 
\[
b\ \dot{=}\ \sup_r  \curl{\prc{ r<R_i\leq r+2t }{\Vmat} \vee \prc{ r-2t<R_i\leq r }{\Vmat} }.
\]
From these choices, we get the desired conclusion:
\[
\sup_{u,v}\abs{ \hat{F}_{n}(u\mid v)-F(u\mid v) }\leq a+b.
\]
with probability at least $1-\alpha/2-{\rm pr}\curl{\sup_v\abs{\hat{\mu}(v)-\mu(v) }\mid \Vmat}$.

For the term $\prc{A>a}{\Vmat}$ in \eqref{eqn:devBoundResidApproach}, notice that:
\[\begin{aligned}\label{eqn:conrollingResidDistrRegTerm1}
A=\ &\sup_{r}\abs{ \frac{1}{|\I_2|}\sum_{ i\in \I_2 }\indic{ U_i-\hat{\mu}(V_i)\leq r }-{\rm pr}\curl{ U_i-\hat{\mu}(V_i)\leq r \mid \Tcal_1} },\\
\end{aligned}\]
where the supremum with respect to $r=u-\hat{\mu}(v)$ is taken in the support of $u-\hat{\mu}(v)$. 

The quantity $\hat{\mu}(v)$ is random, even if $v$ is not, because it is an estimator based on $\Tcal_1$. However, conditional on $(\Tcal_1,\Vmat)$, $r=u-\hat{\mu}(v)$ is a constant. Importantly, conditional on $(\Tcal_1,\Vmat)$, we have that $\hat{R}_i=U_i-\hat{\mu}(V_i)$ is an independent and identically distributed random variable across $i\not\in \I_1$. This means that the estimator
\[
\frac{1}{ |\I_2| }\sum_{ i\in \I_2 }\indic{ U_i-\hat{\mu}(V_i)\leq r }
\]
for 
\[
{\rm pr}\curl{ U_i-\hat{\mu}(V_i)\leq r \mid \Tcal_1,\Vmat}
\]
satisfies the independent and identically distributed sample condition for the Dvoretzky–Kiefer–Wolfowitz inequality \citep{DKWIneq1956,massartDKWIneq1990,NaamanDKWIneq2021}. The Dvoretzky–Kiefer–Wolfowitz inequality gives:

\[\begin{aligned}
&\prc{ A > a }{\Tcal_1,\Vmat}\\
=\ &{\rm pr}\sqbrack{ \sup_{r}\abs{ \frac{1}{|\I_2|}\sum_{ i\in\I_2 }\indic{ U_i-\hat{\mu}(V_i)\leq r }-{\rm pr}\curl{ U_i-\hat{\mu}(V_i)\leq r \mid \Tcal_1} }>a \mid \Tcal_1,\Vmat}\\
{\leq}\ & 2e^{-2|\I_2| a^2}.
\end{aligned}\]
Noting that the upper bound does not depend on $(\Tcal_1,\Vmat)$, we get due to law of total expectation that:
\begin{equation}\begin{aligned}\label{eqn:termALemma}
&\prc{ A>a }{\Vmat}&{\leq}\ & 2e^{-2|\I_2| a^2} \\
\end{aligned}\end{equation}
and
\[\begin{aligned}
&\pr{ A> a }&{\leq}\ & 2e^{-2|\I_2|a^2}. \\
\end{aligned}\]
We will take
\[
a\ \dot{=}\ \curl{ 2^{-1}(|\I_2|)^{-1}\log(4/\alpha) }^{1/2},
\]
so that we are guaranteed $\pr{A > a}\leq\alpha/2$ and $\prc{A> a}{\Vmat}\leq \alpha/2$.

That covers the first important term $\prc{A>a}{\Vmat}$ in Equation \ref{eqn:devBoundResidApproach}. Now consider the second term, $\prc{B>b}{\Vmat}$. Here, 
\[
B:=\sup_{u,v}\abs{ {\rm pr}\sqbrack{ {R}_i\leq u-\mu(v)+\curl{\hat{b}(V_i)-\hat{b}(v)} \mid \Tcal_1}-{\rm pr}\curl{ R_i\leq u-\mu(v) } }
\]
is a random variable with respect to $\Tcal_1$. For $t$, possibly a function of $\Vmat$, consider the following event with respect to the random data in $\Tcal_1$:
\[
E(t):=\left\{\sup_v\abs{ \hat{\mu}(v)-\mu(v) }\leq t\right\}.
\]
The event $E(t)$ concerns the deviation between $\hat{\mu}$ and $\mu$ uniformly across the support of $V_i$. Conditional on the event $E(t)$ and $\Vmat$, we have that the regression bias term defined in \eqref{eqn:biasRegrFunc} satisfies $-t\leq\hat{b}(V_i)\leq t$, in spite of $V_i$ being random, along with $-t\leq\hat{b}(v)\leq t$ for any $v$. This means that conditionally on $\curl{E(t),\Vmat}$:
\begin{equation}\label{eqn:boundedBias}
-2t\leq\hat{b}(V_i)-\hat{b}(v)\leq 2t
\end{equation}
almost surely. 

Now let $E(t)^C$ denote the complement of $E(t)$. Consider that:
\begin{equation}\begin{aligned}\label{eqn:boundOnSecondQuantityResidThm}
\prc{B > b}{\Vmat}&=\ &{\rm pr}\curl{B > b \mid E(t),\Vmat}{\rm pr}\curl{E(t)\mid \Vmat}+{\rm pr}\curl{B > b\mid E(t)^C,\Vmat}{\rm pr}\curl{E(t)^C\mid \Vmat}\\
&\leq\ &{\rm pr}\curl{B > b \mid E(t),\Vmat}+{\rm pr}\curl{ \sup_v\abs{ \hat{\mu}(v)-\mu(v) }>t \mid \Vmat},\\
\end{aligned}\end{equation}
by the law of total probability, the fact that $0\leq \prc{\cdot}{\Vmat},{\rm pr}{\cdot\mid E(t),\Vmat}\leq 1$, along with the definition of $E(t)^C$.

For a random event $D\indep R_i$ (recall, $i\not\in\I_1$), consider the four-argument function:
\[
{\nu}(r,s,D,\Vmat):=\abs{ \prc{ {R}_i\leq r+s }{D,\Vmat}-\pr{ R_i\leq r } }.
\]
For $D=\emptyset$, we write:
\[
{\nu}(r,s,\emptyset,\Vmat):=\abs{ \prc{ {R}_i\leq r+s }{\Vmat}-\pr{ R_i\leq r } }.
\]
Fixing $(r,D,\Vmat)$, we can see that $\nu(r,s,D,\Vmat)$ is an increasing function with respect to $s$ in the domain $[0,2t]$, and it is a decreasing function with respect to $s$ in the domain $[-2t,0)$. From this, it follows that for all $-2t\leq s\leq 2t$:
\[
{\nu}(r,s,D,\Vmat)\ \leq\ {\nu}(r,-2t,D,\Vmat)\vee{\nu}(r,2t,D,\Vmat).
\]
Based on \eqref{eqn:boundedBias} and this property of $\nu(r,s,D,\Vmat)$, it follows that conditional on $E(t)$ and $\Vmat$ we have:
\[
0\leq B\leq \sup_{u,v}\curl{\nu(u-\mu(v),-2t,E(t),\Vmat)\vee\nu(u-\mu(v),2t,E(t),\Vmat)}
\]
almost surely. We have further that
\[\begin{aligned}
&\sup_{r}\left\{\nu(r,-2t,\emptyset,\Vmat)\vee\nu(r,2t,\emptyset,\Vmat)\right\}\\
=\ &\sup_{u,v}\curl{\nu(u-\mu(v),-2t,E(t),\Vmat)\vee\nu(u-\mu(v),2t,E(t),\Vmat)}.
\end{aligned}\]
This is based on the properties of the supremum, the fact that $t$ is fixed conditional on $\Vmat$, and because $R_i\indep E(t)$. So we can re-write that conditional on $E(t)$ and $\Vmat$:
\[
0\leq B\leq \sup_{r}\left\{\nu(r,-2t,\emptyset,\Vmat)\vee\nu(r,2t,\emptyset,\Vmat)\right\}
\]
almost surely. We will strategically set:
\[
b\ \dot{=}\ \sup_{r}\left\{\nu(r,-2t,\emptyset,\Vmat)\vee\nu(r,2t,\emptyset,\Vmat)\right\}.
\]
Moreover, $R_i\indep \Vmat$ means that $\pr{R_i\leq r}=\prc{R_i\leq r}{\Vmat}$, so we have:
\[
b\ =\ \sup_r  \curl{\prc{ r<R_i\leq r+2t }{\Vmat} \vee \prc{ r-2t<R_i\leq r }{\Vmat} }.
\]
With this choice of $b$, we have that ${\rm pr}\curl{B>b\mid E(t),\Vmat}=0$. Using this in \eqref{eqn:boundOnSecondQuantityResidThm}, we get that:

\begin{equation}\label{eqn:termBLemma}
\prc{B>b}{\Vmat}\leq {\rm pr}\curl{ \sup_v\abs{ \hat{\mu}(v)-\mu(v) }>t \mid \Vmat},
\end{equation}
as we wanted.

\end{proof}

\end{lemma}

\subsection{The proof of Corollary \ref{cor:effCondBounds}\label{append:proofOfCorEffCondBounds}}

\begin{proof}

Denote $f_w(r)$ as the marginal density of $R_i(w)$ ($w=0,1$). The key here is that
\[\begin{aligned}
&\sup_r\sqbrack{{\rm pr}\curl{r< R_i(w)\leq r+2t_w\mid \Xmat}\vee{\rm pr}\curl{r-2t_w<R_i(w)\leq r\mid \Xmat}}\\
\leq\ & 2t_w\max_{w=0,1}\sup_r f_w(r),\\
\end{aligned}\]
using that $\int_a^b h(u) du\leq \sup_u |h(u)| |b-a|$ for any integrable function $h$. We also used that $R_i(w)\indep \Xmat$. Under the regularity condition that the density $f_w(r)$ is non-negative and bounded away from infinity, we set $t_w\ \dot{ = }\ g_{n,\mathcal{F}}$ ($w=0,1$). This inequality and Proposition \ref{prop:mainResRegrResids} allow us to arrive at the desired result.
\end{proof}

\subsubsection{The proof of Proposition \ref{prop:CIsGaussResids}\label{append:proofCIsGaussResids}}

\begin{proof}[Proof of Proposition \ref{prop:CIsGaussResids}]

Let $\Psib_w\in R^{ |\I_1\cap S_w|\times d}$ be such that its rows are formed by stacking $\Psi_w(X_i)^T$ for each $i\in\I_1\cap S_w$. Further, let $\Yvec_w\in R^{|\I_1\cap S_w|\times 1}$ contain the corresponding observed outcome $Y_i$ for each $i\in\I_1\cap S_w$. The ordinary least squares estimator for the coefficient vector $\beta_w$ is
\[
\hat{\beta}_w=(\Psib_w^T\Psib_w)^{-1}\Psib_w^T\Yvec_w.
\]
We have that 
\[\begin{aligned}
&\sup_x\abs{\hat{\mu}_w(x)-\mu_w(x)}&=\ &\sup_x\abs{\Psi_w^T(x)(\hat{\beta}_w-\beta_w)}\\
&&\leq\ &\norm{\hat{\beta}_w-\beta_w}_2\\
&&\leq\ &\matnorm{ \sigma_w (\Psib_w^T\Psib_w)^{-1/2} }_{op}\norm{\frac{1}{\sigma_w}(\Psib_w^T\Psib_w)^{1/2}(\hat{\beta}_w-\beta_w) }_2.
\end{aligned}\]
The first inequality is based on H\"older's inequality, while the second inequality is again due to H\"older's inequality and the definition of the operator norm. Conditional on $\Psib_w$, the d-dimensional sampling distribution for $\hat{\beta}$ is: 
\[
\hat{\beta}_w|\Psib_w\sim\mathcal{N}(\beta_w,\sigma^2_w(\Psib_w^T\Psib_w)^{-1}).
\]
This is based on a standard argument in low dimensional linear regression given the residual distribution assumption. This further implies that conditionally on $\Psib_w$,
\[
\matnorm{ \sigma_w(\Psib_w^T\Psib_w)^{-1/2} }_{op}^2\norm{\frac{1}{\sigma_w}(\Psib_w^T\Psib_w)^{1/2}(\hat{\beta}_w-\beta_w) }_2^2\mid\Psib_w \sim \sigma_w^2 \matnorm{ (\Psib_w^T\Psib_w)^{-1/2} }_{op}^2\times\chi_{ d }^2,
\]
a distribution generated by taking a chi-squared distributed random variable ($d$ degrees of freedom) and multiplying it by a factor of $\sigma^2_w\matnorm{ (\Psib_w^T\Psib_w)^{-1/2} }_{op}^2$. This is because 
\[
\frac{1}{\sigma_w}(\Psib_w^T\Psib_w)^{1/2}(\hat{\beta}_w-\beta_w)\mid\Psib_w\sim\mathcal{N}_d(0,\mathbb{I})\implies\norm{\frac{1}{\sigma_w}(\Psib_w^T\Psib_w)^{1/2}(\hat{\beta}_w-\beta_w)}_2^2\mid\Psib_w\sim\chi^2_{ d }.
\]
Let $V$ have a $\chi^2_{ d }$ distribution conditional on $\Psib_w$, and denote $v_{d,\alpha}$ as the $(1-\alpha/2)$th quantile of $V$'s distribution. It follows that setting
\[
t_{w,\alpha}\ \dot{=}\ \parenth{v_{d,\alpha}}^{1/2}\sigma_w\matnorm{ (\Psib_w^T\Psib_w)^{-1/2} }_{op}
\]
implies that
\begin{equation}\begin{aligned}\label{eqn:confLevelPart}
&2\max_{w=0,1}{\rm pr}\curl{ \sup_x\abs{ \hat{\mu}_w(x)-\mu_w(x) }>t_{w,\alpha} \mid \Xmat}&\leq\ &2\max_{w=0,1}{\rm pr}\curl{V>\frac{t^2_{w,\alpha}}{ \sigma^2_w\matnorm{ (\Psib_w^T\Psib_w)^{-1/2} }_{op}^2 } \mid \Psib_w }\\
&&=\ &\alpha.
\end{aligned}\end{equation}
Now, consider that the distributional assumption on $R_i(w)$ implies the following. We have:
\begin{equation}\begin{aligned}\label{eqn:margErrorPart}
&2\max_{w=0,1}\sup_r{\rm pr}\curl{ r<R_i(w)\leq r+2t_{w,\alpha} }\vee{\rm pr}\curl{ r-2t_{w,\alpha}<R_i(w)\leq r }&\\
\overset{(i)}{=}\ &2\max_{w=0,1}\sup_r{\rm pr}\curl{ r<R_i(w)\leq r+2t_{w,\alpha} \mid \Psib_w}\vee{\rm pr}\curl{ r-2t_{w,\alpha}<R_i(w)\leq r \mid \Psib_w}&\\
\overset{(ii)}{=}\ &2\max_{w=0,1}\sup_r{\rm pr}\curl{ r<R_i(w)\leq r+2t_{w,\alpha} \mid \Psib_w}&\\
\overset{(iii)}{=}\ &2\max_{w=0,1}{\rm pr}\curl{ -t_{w,\alpha}<R_i(w)\leq t_{w,\alpha} \mid \Psib_w}&\\
\overset{(iv)}{=}\ &2\max_{w=0,1}{\rm pr}\curl{ -\parenth{v_{d,\alpha}}^{1/2} \matnorm{ (\Psib_w^T\Psib_w)^{-1/2} }_{op} <R_i(w)/\sigma_w\leq \parenth{v_{d,\alpha}}^{1/2} \matnorm{ (\Psib_w^T\Psib_w)^{-1/2} }_{op} \mid \Psib_w}&\\
\overset{(v)}{=}\ &2\max_{w=0,1}\sqbrack{\Phi\curl{ \parenth{v_{d,\alpha}}^{1/2} \matnorm{ (\Psib_w^T\Psib_w)^{-1/2} }_{op}}-\Phi\curl{ -\parenth{v_{d,\alpha}}^{1/2} \matnorm{ (\Psib_w^T\Psib_w)^{-1/2} }_{op} } }.&\\
\end{aligned}\end{equation}
Here, $\Phi$ denotes the $\cdf$ for the standard normal distribution. Equality (i) holds due to the assumption that $R_i(w)\indep \Lambda_w$. Equality (ii) holds due to the symmetry of the Gaussian density around its mean, which is zero in the case of $R_i(w)$ conditional on $\Lambda_w$. Moreover, (iii) holds because the biggest slice of area under the normal density of width $2t_{w,\alpha}$ is the one centered at its mean. Next, (iv) holds due to our choice of $t_{w,\alpha}$, while (v) holds since $R_i(w)/\sigma_w$ follows a standard normal distribution. \\

\noindent{}The final conclusion in Proposition \ref{prop:CIsGaussResids} follows by applying Proposition \ref{prop:mainResRegrResids} with \eqref{eqn:confLevelPart} and \eqref{eqn:margErrorPart}.

\end{proof}

\subsection{The proof of Proposition \ref{prop:boolFrechVsMakarov}\label{append:proofPropBoolVsMak}}

\begin{proof}[Proof of Proposition \ref{prop:boolFrechVsMakarov}]

The first claim is immediate from Lemma \ref{lem:makBounds} and our definition of $\theta^L(\delta)$ and $\theta^U(\delta)$ in Section \ref{sec:margITEMakBnds}. We also use the fact that $\pns$ is the same as $\pibt$ when $0\leq\delta<1$.

We now show why the second part of our claim holds. We will use that $\cdf$s for binary potential outcomes satisfy:
\[
F_w(y)=\begin{cases}0&\text{ if }y<0\\{\rm pr}\curl{Y_i(w)=0}&\text{ if }0\leq y<1\\ 1&\text{ if } y\geq1\end{cases}; w=0,1.
\]

For $0\leq\delta<1$, the Makarov lower bound on $\pibt$ is:
\[\begin{aligned}
&\theta^L(\delta)&=\ &-\min\sqbrack{ \inf_{y=0,1}\curl{F_1(y+\delta/2)-F_0(y-\delta/2) },0 }\\
&&=\ &-\min\curl{ F_1(\delta/2)-F_0(-\delta/2),F_1(1+\delta/2)-F_0(1-\delta/2) ,0 }\\
&&=\ &\begin{cases}-\min\sqbrack{ {\rm pr}\curl{Y_i(1)=0}-{\rm pr}\curl{Y_i(0)=0},0  }&\text{ if }\delta=0\\
-\min\sqbrack{ {\rm pr}\curl{Y_i(1)=0},1-{\rm pr}\curl{Y_i(0)=0},0  }&\text{ if }0<\delta<1\end{cases}\\
&&=\ &\begin{cases}-\min\sqbrack{ 1-{\rm pr}\curl{Y_i(1)=1}-{\rm pr}\curl{Y_i(0)=0},0  }&\text{ if }\delta=0\\0&\text{ if }0<\delta<1\end{cases}\\
&&=\ &\begin{cases}\max\sqbrack{ {\rm pr}\curl{Y_i(1)=1}+{\rm pr}\curl{Y_i(0)=0}-1,0  }&\text{ if }\delta=0\\ 0&\text{ if }0<\delta<1\end{cases}.\\
\end{aligned}\]

It follows that
\[
\sup_{0\leq\delta<1}\theta^L(\delta)=\max\sqbrack{ {\rm pr}\curl{Y_i(1)=1}+{\rm pr}\curl{Y_i(0)=0}-1,0  },
\]
as we wanted.

Similarly, for $0\leq\delta<1$, the Makarov upper bound on $\pibt$ is:
\[\begin{aligned}
&\theta^U(\delta)&=\ &1-\max\sqbrack{ \sup_{y=0,1}\curl{F_1(y+\delta/2)-F_0(y-\delta/2) },0 }\\
&&=\ &1-\max\sqbrack{F_1(\delta/2)-F_0(-\delta/2) ,F_1(1+\delta/2)-F_0(1-\delta/2) ,0 }\\
&&=\ &\begin{cases}1-\max\sqbrack{{\rm pr}\curl{Y_i(1)=0}-{\rm pr}\curl{Y_i(0)=0} ,0  }&\text{ if }\delta=0\\ 1-\max\sqbrack{{\rm pr}\curl{Y_i(1)=0} ,1-{\rm pr}\curl{Y_i(0)=0}  }&\text{ if }0<\delta<1\end{cases}\\
&&=\ &\begin{cases}1-\max\sqbrack{1-{\rm pr}\curl{Y_i(1)=1}-{\rm pr}\curl{Y_i(0)=0} ,0  }&\text{ if }\delta=0\\1-\max\sqbrack{{\rm pr}\curl{Y_i(1)=0} ,{\rm pr}\curl{Y_i(0)=1}  }&\text{ if }0<\delta<1\end{cases}\\
&&=\ &\begin{cases}\min\sqbrack{{\rm pr}\curl{Y_i(1)=1}+{\rm pr}\curl{Y_i(0)=0} ,1  }&\text{ if }\delta=0\\\min\sqbrack{{\rm pr}\curl{Y_i(1)=1} ,{\rm pr}\curl{Y_i(0)=0}  }&\text{ if }0<\delta<1\end{cases}\\
\end{aligned}\]

It follows that
\[
\inf_{0\leq\delta<1 }\theta^U(\delta)=\min\sqbrack{{\rm pr}\curl{Y_i(1)=1} ,{\rm pr}\curl{Y_i(0)=0}  },
\]
as we wanted.

With essentially the same reasoning, we have the analogous claims for \[{\rm pr}\curl{Y_i(1)=1,Y_i(0)=0 \mid X_i=x}.\]

\end{proof}

\end{document}